\pgfplotsset{
compat=1.17,
mystyle/.style={
    scale only axis,
    width=0.7\columnwidth,
    height=0.5\columnwidth,
    label style={inner sep=0, font=\normalsize}, 
    tick label style={font=\scriptsize},
    legend style={font=\scriptsize},
    mark size=3,
    major grid style={dashed},
    line width=0.8pt,
    axis line style = thin}
}
\newcolumntype{M}[1]{>{\centering\arraybackslash}m{#1}}
\newcommand{\Fqm}{\ensuremath{\mathbb F_{q^m}}}
\newcommand{\Fqms}{\ensuremath{\mathbb F_{q^{ms}}}}
\newcommand{\Fq}{\ensuremath{\mathbb F_{q}}}
\newcommand{\F}{\ensuremath{\mathbb F}}
\newcommand{\NN}{\ensuremath{\mathbb{N}}}
\newcommand{\dminsr}{\ensuremath{d}}
\newcommand{\ext}{\ensuremath{\text{ext}}}
\newcommand{\defeq}{:=}
\DeclareMathOperator{\wt}{wt}
\DeclareMathOperator{\rk}{rk}
\DeclareMathOperator{\rkq}{rk_{q}}
\DeclareMathOperator{\rkqm}{rk_{q^m}}
\DeclareMathOperator{\supp}{supp}
\DeclareMathOperator{\diag}{diag}
\DeclareMathOperator{\REF}{REF}
\newcommand{\rker}{\ensuremath{\ker_r}}
\renewcommand{\vec}[1]{\ensuremath{\bm{#1}}}
\newcommand{\mat}[1]{\ensuremath{\bm{#1}}}
\renewcommand{\b}{\vec{b}}
\renewcommand{\c}{\vec{c}}
\newcommand{\g}{\vec{g}}
\newcommand{\n}{\vec{n}}
\renewcommand{\v}{\vec{v}}
\newcommand{\x}{\vec{x}}
\newcommand{\y}{\vec{y}}
\newcommand{\A}{\mat{A}}
\newcommand{\B}{\mat{B}}
\newcommand{\C}{\mat{C}}
\newcommand{\D}{\mat{D}}
\newcommand{\E}{\mat{E}}
\newcommand{\G}{\mat{G}}
\newcommand{\I}{\mat{I}}
\newcommand{\J}{\mat{J}}
\renewcommand{\H}{\mat{H}}
\renewcommand{\P}{\mat{P}}
\renewcommand{\S}{\mat{S}}
\newcommand{\X}{\mat{X}}
\newcommand{\Y}{\mat{Y}}
\newcommand{\0}{\ensuremath{\mathbf 0}}
\newcommand{\mycode}[1]{\ensuremath{\mathcal{#1}}}
\newcommand{\SumRankWeight}{\ensuremath{\wt_{\Sigma R}}}
\newcommand{\SkewWeight}{\ensuremath{\wt_{skew}}}
\newcommand{\SumRankDist}{d_{\ensuremath{\Sigma}R}}
\newcommand{\RankSupp}{\supp_{R}}
\newcommand{\SumRankSupp}{\supp_{\Sigma R}}
\newcommand{\HammingSupp}{\supp_{H}}
\newcommand{\myspace}[1]{\mathcal{#1}}
\newcommand{\Rowspace}[1]{\ensuremath{\myspace{R}_{q}\!\left(#1\right)}}
\newcommand{\RowspaceFqm}[1]{\ensuremath{\myspace{R}_{q^m}\!\left(#1\right)}}
\newcommand{\oh}[1]{\bnd{O}{#1}}
\newcommand{\bnd}[2]{\ensuremath{#1\mathopen{}\left(#2\right)\mathclose{}}}
\newcommand{\intOrder}{\ensuremath{s}}
\newcommand{\shots}{\ensuremath{\ell}}
\newcommand{\todo}[1]{}
\newcommand{\tj}[1]{}
\newcommand{\hb}[1]{}
\newcommand{\fh}[1]{}
\newcommand{\new}[1]{}
\newcommand{\MeKap}{Metzner--Kapturowski}
\DeclareMathOperator{\mksum}{\ensuremath{\bigoplus}}
\newcommand{\h}{\vec{h}}
\newcommand{\Hsub}{\ensuremath{\H_{\text{sub}}}}
\begin{document}

    \title{On Decoding High-Order Interleaved Sum-Rank-Metric Codes\vspace*{-.5cm}}

    \author{Thomas Jerkovits$^{1,3}$ \orcidlink{0000-0002-7538-7639} \and Felicitas Hörmann$^{1,2}$ \orcidlink{0000-0003-2217-9753} \and Hannes Bartz$^1$ \orcidlink{0000-0001-7767-1513}}
    \institute{
        $^1$~Institute of Communications and Navigation,\\German Aerospace Center (DLR), Germany \\
        $^2$~School of Computer Science, University of St.\ Gallen, Switzerland \\
        $^3$~Institute for Communications Engineering, Technical University of Munich (TUM), Germany
        \email{$\{$thomas.jerkovits, felicitas.hoermann, hannes.bartz$\}$@dlr.de}}

    \maketitle

\begin{abstract}
 We consider decoding of vertically homogeneous interleaved sum-rank-metric codes with high interleaving order $\intOrder$, that are constructed by stacking $\intOrder$ codewords of a single constituent code.
 We propose a \MeKap-like decoding algorithm that can correct errors of sum-rank weight $t \leq d-2$, where $d$ is the minimum distance of the code, if the interleaving order $\intOrder \geq t$ and the error matrix fulfills a certain rank condition.
 The proposed decoding algorithm generalizes the \MeKap(-like) decoders in the Hamming metric and the rank metric and has a computational complexity of $\oh{\max\{n^3, n^2\intOrder\}}$ operations in $\Fqm$
% \fh{Can we make clear that this is in terms of $\Fqm$-operations?}
 , where $n$ is the length of the code.
 The scheme performs linear-algebraic operations only and thus works for
% \fh{"the interleaving of"? We still need this structure.}
 any
% vertically homogenous
 interleaved linear sum-rank-metric code.
 We show how the decoder can be used to decode high-order interleaved codes in the skew metric.
 Apart from error control, the proposed decoder allows to determine the security level of code-based cryptosystems based on interleaved sum-rank metric codes.
\end{abstract}

%\fh{Should we mention that we consider \textit{homogeneous} interleaving somewhere in the abstract or the introduction? The first time this appears is in Section 2.2 where we define interleaved codes and describe the channel model.}
%    \tj{Add vertically also with explanation}

%----------------------------------------------------------------------------
% Introduction
%----------------------------------------------------------------------------
\section{Introduction} \label{sec:introduction}

The development of quantum-secure cryptosystems is crucial in view of the recent advances in the design and the
realization of quantum computers.
As it is reflected in the number of submissions during the NIST’s post-quantum cryptography standardization process for \acp{KEM}, many promising candidates belong to the family of code-based systems of which still three candidates are in the current 4th round~\cite{NISTreport2022}.
Code-based cryptography is mostly based on the McEliece cryptosystem~\cite{McEliece1978} whose trapdoor is that the
public code can only be efficiently decoded if the secret key is known.

Variants of the McEliece cryptosystem based on interleaved codes in the Hamming and the rank metric were proposed
in~\cite{elleuch2018public,holzbaur2019decoding,renner2019interleavingloidreau}.
Interleaving is a well-known technique in coding theory that enhances a code's burst-error-correction capability.
The idea is to stack a fixed number $s$ of codewords of a constituent code over a field $\Fqm$ in a matrix and thus to
transform burst errors into errors occurring at the same position in each codeword.
Equivalently, these errors can be seen as symbol errors in a vector code over the extension field $\Fqms$.
There exist list and/or probabilistic unique decoders for interleaved \ac{RS} codes
%\fh{Put the parentheses after "in the Hamming metric" to be consistent with the rest of the sentence.}
in the Hamming metric~\cite{krachkovsky1997decoding} as well as for interleaved Gabidulin codes in the rank metric~\cite{Loidreau_Overbeck_Interleaved_2006} and for interleaved \ac{LRS} codes in the sum-rank metric~\cite{bartz2022fast}
%\fh{Maybe use the same parentheses construction here to be more consistent. Human brains like parallel structures in sentences.}
%\tj{Maybe get rid of all the parentheses citation constructions? We are using this throughout the paper in arbitrary places. Either this follows a certain rule, which I'm not seeing right now or in my opinion we should get rid of it.}
%\fh{Also fine for me. :-)}
.

All of the mentioned decoders are tailored to a particular code family and explicitly exploit the code structure.
In contrast, Metzner and Kapturowski proposed a decoder which
%\fh{Grammatically, "which" refers to "codes" and not to "decoder", unfortunately.}
%\tj{Is this version better?}
%\fh{Yes, excellent!}
works for interleaved Hamming-metric codes with \emph{any}
linear constituent code.
The decoding algorithm only requires a high interleaving order $s$ as well as a linear-independence constraint
on the error~\cite{metzner1990general}.
Variants of the linear-algebraic Metzner--Kapturowski algorithm were further studied
in~\cite{oh1994performance, haslach1999decoding, haslach2000efficient, metzner2002vector, metzner2003vector, roth2014coding},
often under the name \ac{VSD}.
Moreover, Puchinger, Renner and Wachter-Zeh adapted the algorithm to the rank-metric case in~\cite{puchinger2019decoding,renner2021decoding}.

This affects the security level of McEliece variants that are based on interleaved codes in the Hamming and the rank
metric as soon as the interleaving order $s$ is too large (i.e. $s \geq t$ for error weight $t$).
Cryptosystems based on interleaved codes with small interleaving order are not affected.
Their security level can be evaluated based on \ac{ISD} algorithms (see e.g.~\cite{porwal2022interleavedPrange} for an
adaptation of Prange's algorithm to interleaved Hamming-metric codes).

\paragraph{Contribution}
We present a \MeKap-like decoding algorithm for high-order interleaved sum-rank-metric codes with an arbitrary linear
constituent code.
This gives valuable insights for the design of McEliece-like cryptosystems based on interleaved codes in the sum-rank
metric.
The proposed algorithm is purely linear-algebraic and can guarantee to correct errors of sum-rank weight $t \leq d-2$ if the error matrix has full $\Fqm$-rank $t$, where $d$ is the minimum distance of the code.
The computational complexity of the algorithm is in the order of $\oh{\max\{n^3, n^2\intOrder\}}$ operations over $\Fqm$, where $\intOrder \geq t$
is the interleaving order and $n$ denotes the length of the linear constituent code.
%\tj{Comment on decoding complexity of constituent code. Say that its generic.}
%\tj{What about: Note that no structure on the underlying constituent code is assumed for the derivation of the decoding complexity.}
%\fh{I like it. Only one minor thing: I would rather say "the derivation of the decoder and its complexity" but it's a matter of taste.}
Note, that the decoding complexity is independent of the code structure of the constituent code since the proposed algorithm exploits properties of high-order interleaving only.
Since the sum-rank metric generalizes both the Hamming and the rank metric, the original Metzner--Kapturowski
decoder~\cite{metzner1990general} as well as its rank-metric analog~\cite{puchinger2019decoding,renner2021decoding}
can be recovered from our proposal.

% \fh{Maybe remove this:
% \paragraph{Outline}
% We start by giving some preliminaries and the channel model in Section~\ref{sec:preliminaries}.
% Section~\ref{sec:decoder} is the main part of the paper and contains a detailed description of the \MeKap-like decoder
% for the sum-rank-metric case.
% We highlight how the special cases for the Hamming and the rank metric can be recovered in Section~\ref{sec:comparison}
% and conclude the paper with a summary and some open research problems in Section~\ref{sec:conclusion}.
% }

%----------------------------------------------------------------------------
% Preliminaries
%----------------------------------------------------------------------------
\section{Preliminaries} \label{sec:preliminaries}

%\begin{itemize}
%\item expansion map $\Fqm \to \Fq^m$
%\item notation for submatrices
%\item sum-rank and skew metric (and their isometry)
%\item homogeneous vertically interleaved codes
%\end{itemize}

% \subsection{Notation}

Let $q$ be a power of a prime and let $\Fq$ denote the finite field of order $q$ and $\Fqm$ an extension field of degree $m$.
We use $\Fq^{a\times b}$ to denote the set of all $a\times b$ matrices over $\Fq$ and $\Fqm^b$ for the set of all row vectors of length $b$ over $\Fqm$.

Let $\vec{b} = (b_1,\ldots,b_m) \in \Fqm^m$ be a fixed (ordered) basis of $\Fqm$ over $\Fq$.
We denote by $\ext(\alpha)$ the column-wise expansion of an element $\alpha \in \Fqm$ over $\Fq$ (with respect to $\b$), i.e.
\begin{equation*}
    \ext : \Fqm \mapsto \Fq^{m}
\end{equation*}
such that $\alpha = \vec{b} \cdot \ext(\alpha)$.
This notation is extended to vectors and matrices by applying $\ext(\cdot)$ in an element-wise manner.

By $[a:b]$ we denote the set of integers $[a:b]\defeq\{i:a\leq i\leq b\}$.
For a matrix $\A$ of size $a \times b$ and entries $A_{i,j}$ for $i \in [1:a]$  and $j \in [1:b]$, we define the submatrix notation
\begin{equation*}
    \A_{[c:d],[e:f]}\defeq
    \begin{pmatrix}
        A_{c,e} & \dots & A_{c,f}
        \\
        \vdots & \ddots & \vdots
        \\
        A_{d,e} & \dots & A_{d,f}
    \end{pmatrix}.
\end{equation*}

The $\Fqm$-linear row space of a matrix $\A$ over $\Fqm$ is denoted by $\RowspaceFqm{\A}$.
Its $\Fq$-linear row space is defined as $\Rowspace{\A} \defeq \myspace{R}_q{(\ext{(\A)})}$.
We denote the row-echelon form and the (right) kernel of $\A$ as $\REF(\A)$ and $\rker(\A)$, respectively.

\subsection{Sum-Rank-Metric Codes}
Let $\n=(n_1,\dots, n_\shots)\in\NN^\shots$ with $n_i > 0$ for all $i \in [1:\shots]$ be a length partition\footnote{Note that this is also known as (integer) composition into exactly $\shots$ parts in combinatorics.
%\tj{I also checked again. This is actually called a weak composition, since we also allow zeros.}\fh{Fun Fact: I checked a combinatorics book that I have at home again -- they define compositions with nonnegative values and weak ones with positive values, i.e. exactly the opposite from Wikipedia etc. :D We can keep it with weak because that's what is used in the first Google results. :-)} \tj{Haha great if literature is not consistent. Or maybe just write "is related to integer compositions known in combinatoris..."? :D}
}  of $n$, i.e. $n = \sum_{i=1}^{\shots} n_i$.
Further let $\x=(\x^{(1)} \,|\, \x^{(2)} \,|\, \dots \,|\, \x^{(\shots)})\in\Fqm^n$ be a vector over a finite field $\Fqm$ with $\x^{(i)}\in\Fqm^{n_i}$.
For each $\x^{(i)}$ define the rank $\rkq(\x^{(i)}) \defeq \rkq(\ext(\x^{(i)}))$ where $\ext(\x^{(i)})$ is a matrix in $\Fq^{m \times n_i}$ for all $i\in[1:\shots]$.
The \emph{sum-rank weight} of $\x$ with respect to the length partition $\n$ is defined as
\begin{equation*}
    \SumRankWeight(\x) \defeq \sum_{i=1}^{\shots}\rk_q(\x^{(i)})
\end{equation*}
and the \emph{sum-rank distance} of two vectors $\x, \y \in \Fqm^n$ is defined as $\SumRankDist(\x,\y) \defeq \SumRankWeight(\x-\y)$.
Note that the sum-rank metric equals the Hamming metric for $\shots=n$ and is equal to the rank metric for $\shots=1$.

An \emph{$\Fqm$-linear sum-rank-metric code} $\mycode{C}$ is an $\Fqm$-subspace of $\Fqm^{n}$.
It has length $n$ (with respect to a length partition $\n$), dimension $k \defeq \dim_{q^m}(\mycode{C})$ and minimum
(sum-rank) distance $d \defeq \min \{\SumRankDist(\x,\y) : \x, \y \in \mycode{C}\}$.
To emphasize its parameters, we write $\mycode{C}[\n, k, d]$ in the following.

\subsection{Interleaved Sum-Rank-Metric Codes and Channel Model}
A (vertically) $s$-interleaved code is a direct sum of $s$ codes of the same length $n$.
In this paper we consider \emph{homogeneous} interleaved codes, i.e. codes obtained by interleaving codewords of \emph{a single} constituent code.

\begin{definition}[Interleaved Sum-Rank-Metric Code]
    Let $\mycode{C}[\n,k,d]\subseteq\Fqm^n$ be an $\Fqm$-linear sum-rank-metric code of length $n$ with length partition $\n=(n_1,n_2,\dots,n_\shots)\in\NN^\shots$ and minimum sum-rank distance $d$.
    Then the corresponding (homogeneous) $\intOrder$-interleaved code is defined as
    \begin{equation*}
        \mycode{IC}[\intOrder;\n,k,\dminsr]
        \defeq
        \left\{
        \begin{pmatrix}
            \c_1
            \\[-4pt]
            \vdots
            \\[-4pt]
            \c_\intOrder
        \end{pmatrix}:
        \c_j=(\c_j^{(1)} \, | \, \dots \, | \, \c_j^{(\shots)}) \in \mycode{C}[\n,k,\dminsr]
        \right\}
        \subseteq\Fqm^{\intOrder \times n}.
    \end{equation*}
\end{definition}

Each codeword $\C\in\mycode{IC}[\intOrder;\n,k,d]$ can be written as
\begin{equation*}
    \mat{C}=
    \left(
    \begin{array}{c|c|c|c}
        \c_1^{(1)} & \c_1^{(2)} & \dots & \c_1^{(\shots)}
        \\
        \vdots & \vdots & \ddots & \vdots
        \\
        \c_\intOrder^{(1)} & \c_\intOrder^{(2)} & \dots & \c_\intOrder^{(\shots)}
    \end{array}
    \right)\in\Fqm^{\intOrder \times n}
\end{equation*}
or equivalently as
\begin{equation*}
    \mat{C}=
    (\vec{C}^{(1)} \mid \mat{C}^{(2)} \mid \dots \mid \mat{C}^{(\shots)})
\end{equation*}
where
\begin{equation}
    \label{eq:comp_mat_ILRS}
    \vec{C}^{(i)}\defeq
    \begin{pmatrix}
        \c_1^{(i)}
        \\
        \c_2^{(i)}
        \\
        \vdots
        \\
        \c_\intOrder^{(i)}
    \end{pmatrix}
    \in\Fqm^{\intOrder\times n_i}
\end{equation}
for all $i\in[1:\shots]$.

As a channel model we consider the additive sum-rank channel
\begin{equation}
    \Y = \C + \E
\end{equation}
where
\begin{equation}
    \E=(\E^{(1)} \, | \, \E^{(2)} \, | \, \dots \, | \, \E^{(\shots)} )\in\Fqm^{\intOrder \times n}
\end{equation}
with $\E^{(i)}\in\Fqm^{\intOrder \times n_i}$ and $\rk_q(\E^{(i)})=t_i$ for all $i\in[1:\shots]$ is an error matrix with $\SumRankWeight(\E)=\sum_{i=1}^{\shots}t_i=t$.

\section{Decoding of High-Order Interleaved Sum-Rank-Metric Codes} \label{sec:decoder}

%\begin{itemize}
%    \item Statements and algorithm
%
%    \item Estimate failure probability if no condition on the error is enforced \todo{FH}
%        \begin{itemize}
%            \item There should be an intuition that particular patterns like e.g. well-balanced error rank partitions are more likely to have $\Fqm$-full-rank.
%
%            \item Make some simulations to evaluate the tightness of the bound
%        \end{itemize}
%\end{itemize}

In this section, we propose a \MeKap-like decoder for the sum-rank metric,
that is a generalization of the decoders proposed in~\cite{metzner1990general, puchinger2019decoding, renner2021decoding}.
Similar to the Hamming- and the rank-metric case, the proposed decoder works for errors of sum-rank weight $t$ up to $d-2$ that satisfy the following conditions:
\begin{itemize}
    \item \emph{High-order condition:} The interleaving order $\intOrder \geq t$,
    \item \emph{Full-rank condition:} Full $\Fqm$-rank error matrices, i.e., $\rkqm(\E) = t$.
\end{itemize}
%\fh{Do we maybe want to write full sentences in the conditions here?}
%\tj{I think it is fine as is}
%\fh{Alright. :-)}
Note that the full-rank condition implies the high-order condition since the $\Fqm$-rank of a matrix $\E \in \Fqm^{\intOrder \times n}$ is at most $\intOrder$.
%The full-rank condition is satisfied for most error matrices $\E$ of sum-rank weight $t$ (see also~\cite{metzner1990general, renner2021decoding}).
%\fh{Do we want to give any reasoning for that? E.g. "by simulations" or "as one can see by simple counting arguments" etc.}
%\tj{To be honest, I think we should remove this sentence, since we don't know at this point. We know with the upcoming results about the error probability and because I computed/simulated it for some parameters. But this is future work.
%Deducting that it is the case for the sum-rank metric because it is the case fo the Hamming and rank metric is a bit vague in my opinion here. What do you think?}
%\hb{If you know it from simulations, I would put it here, since this makes the result stronger.}
%\fh{I agree that \emph{if} we want to keep this, we should argue more (e.g. with your simulations) and already touch on future work. But I'm also totally fine with leaving it out -- it's definitely the simpler option and noone will actually miss it, I guess.}
%\tj{Yeah, but the "simulations" are just prelimanry results. I'm also not sure if that is true for ALL parameters. So either way it would be vague assumptions at this point. I'm more of the opinion of better saying nothing than saying something vauge or even wrong. But I leave it to you :P}

Throughout this section we consider a homogeneous $\intOrder$-interleaved sum-rank-metric code $\mycode{IC}[\intOrder;\n,k,d]$ over a field $\Fqm$ with a constituent code $\mycode{C}[\n,k,d]$ defined by a parity-check matrix 
\begin{equation}
    \H = (\H^{(1)}\,|\,\H^{(2)}\,|\,\ldots\,|\,\H^{(\shots)})\in\Fqm^{(n-k)\times n}
\end{equation}
with $\H^{(i)}\in\Fqm^{(n-k) \times n_i}$.
The goal is to recover a codeword $\C\in\mycode{IC}[\intOrder;\n,k,d]$ from the matrix
\begin{equation*}
    \Y = \C + \E \in \Fqm^{\intOrder \times n}
\end{equation*}
that is corrupted by an error matrix $\E$ of sum-rank weight $\SumRankWeight(\E)=t$ assuming \emph{high-order} and \emph{full-rank} condition.
%\fh{Do we want to incorporate the conditions here once more? E.g. "...of sum-rank weight $\SumRankWeight(\E)=t \leq s$ and of $\Fqm$-rank $t$."}
%\tj{I added high order and full rank condition in words. Remove it or leave it as you think fits best.}
%\fh{I think it's good, thanks for adding it.}

As the Metzner--Kapturowski algorithm and its adaptation to the rank metric, the presented decoding algorithm consists of two steps.
%\fh{colon since we name them afterwards?}
%\tj{I don't like colons inline in a sentence. Looks weird. If using colons then like above with an itemize following. But an itemize here would be overkill. So I would just leave it as it is.}
%\fh{Alright, fine by me. :-)}
The decoder first determines the error support from the syndrome matrix $\S = \H\Y^\top$.
Secondly, erasure decoding is performed to recover the error $\E$ itself.

\subsection{The Error Support}

The error matrix $\E$ can be decomposed as
\begin{equation}\label{eq:error_decomp}
    \E = \A \B
\end{equation}
where $\A = (\A^{(1)} \, | \, \A^{(2)} \, | \, \dots \, | \, \A^{(\shots)} )\in\Fqm^{\intOrder \times t}$ with $\A^{(i)}\in\Fqm^{\intOrder\times t_i}$ and $\rkq(\A^{(i)})=t_i$
%\fh{We have never said what $t_i$ is for $\E$ (except for the implicit mention in the definition of the sum-rank weight in the preliminaries). I think we should add that in the previous section.}
%\tj{In the last sentence of the preliminaries it literally says $t_i=\rkq(\E^{(i)})$. If that is implicit, what is explicit then? :)}
%\fh{Oh sorry, I checked the definition of the sum-rank metric and not the channel model. :D}
and
\begin{equation}\label{eq:def_B}
    \B = \diag{(\B^{(1)},\dots,\B^{(\shots)})} \in \Fq^{t \times n}
\end{equation}
with $\B^{(i)}\in\Fq^{t_i \times n_i}$ and $\rkq(\B^{(i)})=t_i$ for all $i\in[1:\shots]$ (see~\cite[Lemma 10]{puchinger2020generic}).
The rank support of one block $\E^{(i)}$ is defined as
\begin{equation*}
    \RankSupp\left(\E^{(i)}\right)\defeq\Rowspace{\E^{(i)}}=\Rowspace{\B^{(i)}}.
\end{equation*}
%\fh{In my opinion, the above equation is quite random here. I understand that we use it later when we compare the different settings. However, I'd suggest to either delete it here and only define it when we need it or to keep it here but with some more context.}
The sum-rank support for the error $\E$ with sum-rank weight $t$ is then defined as
\begin{align}\label{eq:sumranksupport}
    \SumRankSupp(\E)
    \defeq
    &\RankSupp{\left(\E^{(1)}\right)} \times \RankSupp{\left(\E^{(2)}\right)} \times \dots \times \RankSupp{\left(\E^{(\shots)}\right)}
    \\
    =&\Rowspace{\B^{(1)}} \times \Rowspace{\B^{(2)}} \times \dots \times \Rowspace{\B^{(\shots)}}.
\end{align}

%\tj{Maybe introduce a notion for basis of $\SumRankSupp(\E)$}
%\tj{Follow up: I'm not sure anymore if we need this.}
%\fh{I think both options are fine. We don't really \emph{need} it but I think one can argue that the notion still makes sense since we (implicitly) use it throughout the paper. Feel free to decide. :-)}

The following result from~\cite{puchinger2020generic} shows how the error matrix $\E$ can be reconstructed from the sum-rank support and the syndrome matrix $\S$.

\begin{lemma}[Column-Erasure Decoder~{\cite[Theorem 13]{puchinger2020generic}}]\label{lem:col_erasure_decoding}
    Let $t<d$ and $\B = \diag{(\B^{(1)},\dots,\B^{(\shots)})}\in\Fq^{t\times n}$ be a basis of the row space of the error matrix $\E\in\Fqm^{\intOrder\times n}$
%    such that $\B^{(i)}$ is a basis of the row space of $\E^{(i)}$ for all $i\in[1:\ell]$
    and $\S = \H\E^\top \in \Fqm^{(n-k)\times \shots}$ be the corresponding syndrome matrix.
    Then, the error is given by $\E = \A\B$ with $\A$ being the unique solution of the linear system
    \begin{equation}
        \S = (\H\B^{\top})\A^\top
    \end{equation}
    and $\E$ can be computed in $\oh{(n-k)^3 m^2}$ operations over $\Fq$.
\end{lemma}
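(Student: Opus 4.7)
The plan is to verify the linear system, establish uniqueness via a minimum-distance argument, and finally bound the cost of Gaussian elimination.

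First I would substitute the factorization $\E=\A\B$, which is guaranteed by~\cite[Lemma 10]{puchinger2020generic}, into the syndrome definition and use transposition to obtain
\begin{equation*}
\S \;=\; \H\E^\top \;=\; \H(\A\B)^\top \;=\; (\H\B^\top)\,\A^\top,
\end{equation*}
so $\A^\top$ is a solution of the stated system.

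The main obstacle is to show that this solution is unique, i.e., that $\H\B^\top\in\Fqm^{(n-k)\times t}$ has full column rank $t$. I would argue by contradiction: suppose there exists $\v\in\Fqm^t\setminus\{\0\}$ with $\H\B^\top\v=\0$, and set $\c \defeq \v^\top\B\in\Fqm^n$. Then $\H\c^\top=\0$, so $\c$ is a codeword of $\mycode{C}[\n,k,d]$. The block-diagonal structure of $\B$ together with $\B^{(i)}\in\Fq^{t_i\times n_i}$ and $\rkq(\B^{(i)})=t_i$ implies $\rkq(\c^{(i)})\leq t_i$ for each $i\in[1:\shots]$, hence $\SumRankWeight(\c)\leq \sum_{i=1}^{\shots} t_i = t<d$. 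By the minimum-distance property of $\mycode{C}$, $\c=\0$, and since $\B$ has full $\Fq$-row rank $t$, this forces $\v=\0$ — a contradiction. Hence $\A^\top$ is uniquely determined and $\E=\A\B$ is recovered.

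For the complexity, I would solve the system $\S=(\H\B^\top)\A^\top$ by Gaussian elimination over $\Fqm$; since $t\leq n-k$, this costs $\oh{(n-k)^3}$ operations in $\Fqm$, i.e.\ $\oh{(n-k)^3 m^2}$ operations in $\Fq$. The back-substitution for the $\intOrder$ columns of $\A^\top$ and the subsequent product $\A\B$ (with $\B$ block-diagonal and over $\Fq$) are absorbed into this bound.
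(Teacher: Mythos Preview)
The paper does not prove this lemma; it is quoted verbatim from~\cite[Theorem~13]{puchinger2020generic} and stated without proof, so there is no in-paper argument to compare against.

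Your argument for existence and uniqueness is correct and is essentially the standard one: the substitution $\S=\H(\A\B)^\top=(\H\B^\top)\A^\top$ is immediate, and your contradiction via the minimum distance is exactly the right mechanism. One small remark: when you conclude $\v=\0$ from $\v^\top\B=\0$, you implicitly use that the $\Fq$-rank-$t$ matrix $\B$ also has $\Fqm$-rank $t$; this is true (rank is preserved under field extension) but worth stating.

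On the complexity, be a little careful with the phrase ``absorbed into this bound''. Solving $(\H\B^\top)\A^\top=\S$ by elimination costs $\oh{(n-k)^2t}$ for the reduction plus $\oh{(n-k)t\intOrder}$ for the $\intOrder$ right-hand sides, and forming $\A\B$ costs $\oh{\intOrder t n}$ (cheaper since $\B$ is over $\Fq$ and block-diagonal). These are dominated by $\oh{(n-k)^3}$ over $\Fqm$ only under an implicit size assumption on $\intOrder$; the cited bound appears to refer to the linear-algebra step for $\A$ rather than a universal bound in $\intOrder$. This is a quibble about the cited statement, not a flaw in your reasoning.
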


%\begin{proof}
%    See Proof of Theorem~13 in~\cite{puchinger2020generic}.
%\end{proof}
%
%\tj{Maybe move this to preliminaries?}

\subsection{Recovering the Error Support}

In the following we show how to recover the sum-rank support $\SumRankSupp(\E)$ of the error $\E$
given the syndrome matrix
\begin{equation}\label{eq:syndromeMatrix}
        \S = \H\Y^\top
        =\H\E^\top
        =\sum_{i=1}^{\shots}\H^{(i)}(\E^{(i)})^\top
\end{equation}
and the parity-check matrix $\H$ of the sum-rank-metric code $\mycode{IC}[\intOrder;\n,k,d]$.
Let $\P\in\Fqm^{(n-k)\times(n-k)}$ with $\rkqm(\P)=n-k$ be such that $\P\S = \REF(\S)$.
Further, let $\Hsub$ be the rows of $\P\H$ corresponding to the zero rows in $\P\S$, i.e. we have
    \begin{equation*}
        \P\S =
        \begin{pmatrix}
            \S'
            \\
            \0
        \end{pmatrix}
        \quad\text{and}\quad
        \P\H=
        \begin{pmatrix}
            \H'
            \\
            \Hsub
        \end{pmatrix}
    \end{equation*}
where $\S'$ and $\H'$ have the same number of rows.
Since $\P$ performs $\Fqm$-linear row operations on $\H$, the $\shots$ blocks of $\P\H$ are preserved, i.e. we have that
\begin{equation}
    \Hsub = \left( \Hsub^{(1)} \, | \, \Hsub^{(2)} \, | \, \dots \, | \, \Hsub^{(\shots)} \right).
\end{equation}

The following lemma is a generalization of~\cite[Lemma~3]{puchinger2019decoding} to the sum-rank metric.
%\fh{colon because of "the following"?}\tj{no}
\begin{lemma}
    \label{lem:row_space_H_sub}
    Let $\H=(\H^{(1)} \, | \, \H^{(2)} \, | \, \dots \, | \, \H^{(\shots)})\in\Fqm^{(n-k) \times n}$ be a parity-check matrix of a sum-rank-metric code $\mycode{C}$ and let $\S=\H\E^\top\in\Fqm^{(n-k) \times \intOrder}$ be the syndrome matrix of an error
    \begin{equation*}
        \E=(\E^{(1)} \, | \, \E^{(2)} \, | \, \dots \, | \, \E^{(\shots)})\in\Fqm^{\intOrder \times n}
    \end{equation*}
    of sum-rank weight $\SumRankWeight(\E)=t < n-k$ where $\E^{(i)}\in\Fqm^{\intOrder \times n_i}$ with $\rk_q(\E^{(i)})=t_i$ for all $i\in[1:\shots]$.
    Let $\P\in\Fqm^{(n-k) \times (n-k)}$ be a matrix with $\rk_{q^m}(\P)=n-k$ such that $\P\S$ is in row-echelon form.
    Then, $\P\S$ has at least $n-k-t$ zero rows.
%    \fh{Maybe use ``$\P\S$ has at least $n-k-t$ zero rows'' since it may be confusing to read that rows are equal to a number.}
    Let $\Hsub$ be the submatrix of $\P\H$ corresponding to the zero rows in $\P\S$.
    Then we have that
    \begin{equation*}
        \label{eq:row_space_H_sub}
        \RowspaceFqm{\Hsub}=\rker(\E)_{q^m} \cap \mycode{C}^\perp
        \;\Longleftrightarrow\;
        \RowspaceFqm{\Hsub}=\rker(\E)_{q^m} \cap \RowspaceFqm{\H}.
    \end{equation*}
\end{lemma}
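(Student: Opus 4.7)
My plan is to prove the two assertions of the lemma separately, noting that the stated equivalence is immediate: since $\H$ is a full-rank parity-check matrix, $\mycode{C}^\perp=\RowspaceFqm{\H}$, so the two equalities in the equivalence are literally the same statement. It therefore suffices to establish the common identity for $\RowspaceFqm{\Hsub}$.

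For the ``at least $n-k-t$ zero rows'' claim, I would argue purely via rank. Writing $\E=\A\B$ with $\A\in\Fqm^{\intOrder\times t}$ and $\B\in\Fq^{t\times n}$ as in~\eqref{eq:error_decomp}--\eqref{eq:def_B} gives $\rkqm(\E)\leq t$ and hence $\rkqm(\S)=\rkqm(\H\E^\top)\leq t$. Since $\P$ is invertible, $\rkqm(\P\S)\leq t$ as well; because a matrix in row-echelon form has rank equal to its number of non-zero rows, $\P\S$ must contain at least $(n-k)-t$ zero rows.

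For the identity $\RowspaceFqm{\Hsub}=\rker(\E)_{q^m}\cap\RowspaceFqm{\H}$ I would establish both inclusions. The inclusion ``$\subseteq$'' is direct: each row $\h$ of $\Hsub$ is by definition a row of $\P\H$ and hence an $\Fqm$-linear combination of rows of $\H$, so $\h\in\RowspaceFqm{\H}$; moreover $\h$ corresponds to a zero row of $\P\S=\P\H\E^\top$, so $\h\E^\top=\0$, which is exactly the membership $\h\in\rker(\E)_{q^m}$. For the reverse inclusion I would take $\v\in\rker(\E)_{q^m}\cap\RowspaceFqm{\H}$, write $\v=\u\H$ for some $\u\in\Fqm^{n-k}$, and compute $\u\S=\u\H\E^\top=\v\E^\top=\0$. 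Setting $\u=\u'\P$ via invertibility of $\P$ gives $\u'(\P\S)=\0$; since $\P\S$ is in row-echelon form, its non-zero rows are $\Fqm$-linearly independent, so $\u'$ must vanish on every coordinate corresponding to a non-zero row of $\P\S$. Consequently $\v=\u'(\P\H)$ is a combination of exactly those rows of $\P\H$ that form $\Hsub$, i.e.\ $\v\in\RowspaceFqm{\Hsub}$.

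The main subtlety I anticipate lies in the reverse inclusion, where one must translate the condition $\u\S=\0$ through the change of basis $\u=\u'\P$ and exploit the row-echelon structure of $\P\S$ to confine the support of $\u'$ to the zero-row indices. The remaining pieces are routine bookkeeping, and the bound on the number of zero rows of $\P\S$ drops out of the same rank inequality that underlies the decomposition $\E=\A\B$ at the heart of the sum-rank support description.
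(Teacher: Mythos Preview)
Your proposal is correct and follows essentially the same argument as the paper: the rank bound $\rkqm(\E)\leq t$ gives the $n-k-t$ zero rows, and for the reverse inclusion the paper likewise writes an element of $\RowspaceFqm{\P\H}$ as $(\v_1,\v_2)\begin{psmallmatrix}\H'\\\Hsub\end{psmallmatrix}$ and uses that the non-zero rows $\S'$ of $\P\S$ are linearly independent to force $\v_1=\0$---exactly your $\u'=\u\P^{-1}$ argument with different bookkeeping. Your remark that the stated equivalence is immediate from $\mycode{C}^\perp=\RowspaceFqm{\H}$ is also correct and is implicitly used by the paper.
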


\begin{proof}
    Since $\E^{(i)}$ has $\Fq$-rank $t_i$, its $\Fqm$-rank is at most $t_i$ for all $i\in[1:\ell]$. Since $t=\sum_{i=1}^{\ell} t_i$, $\E$ has at most $\Fqm$-rank $t$ as well.
    Hence, the $\Fqm$-rank of $\S$ is at most $t$ and thus at least $n-k-t$ of the $n-k$ rows of $\P\S$ are zero.

    The rows of $\P\H$ and therefore also the rows of $\Hsub$ are in the row space of $\H$, i.e. in the dual code $\mycode{C}^{\perp}$.
    Since $\Hsub\E^{\top}=\0$ the rows of $\Hsub$ are in the kernel of $\E$. It is left to show that the rows of $\Hsub$ span the entire intersection space.
    Write
        \begin{equation*}
        \P\S =
        \begin{pmatrix}
            \S'
            \\
            \0
        \end{pmatrix}
        \quad\text{and}\quad
        \P\H=
        \begin{pmatrix}
            \H'
            \\
            \Hsub
        \end{pmatrix}
    \end{equation*}
    where $\S'$ and $\H'$ have the same number of rows and $\S'$ has full $\Fqm$-rank.
    Let $\v = (\v_1, \v_2) \in \Fqm^{n-k}$ and let
    \begin{equation*}
        \h = \v    \cdot  \begin{pmatrix}
            \H'
            \\
            \Hsub
        \end{pmatrix}
    \end{equation*}
    be a vector in the row space of $\P\H$ and in the kernel of $\E$.
    Since $\Hsub\E^{\top}=\bm{0}$ we have that $\bm{0} = \h \E^{\top} = \v_1 \H' \E^{\top} = \v_1 \S'$.
    This implies that $\v_1 = \0$ since the rows of $\S'$ are linearly independent and thus $\h$ is in the row space of $\Hsub$.
\qed
\end{proof}

Lemma~\ref{lem:Fqm_kernel_E_B} shows that the kernel $\rker{(\E)}_{q^m}$ of the error $\E$ is connected with the kernel of the matrix $\B$ if the $\Fqm$-rank of the error is $t$, i.e. if the full-rank condition is satisfied.

\begin{lemma}
    \label{lem:Fqm_kernel_E_B}
    Let $\E=(\E^{(1)} \, | \, \E^{(2)} \, | \, \dots \, | \, \E^{(\shots)})\in\Fqm^{\intOrder \times n}$ be an error of sum-rank weight $\SumRankWeight(\E)=t$ where $\E^{(i)}\in\Fqm^{\intOrder \times n_i}$ with $\rk_q(\E)=t_i$ for all $i\in[1:\shots]$.
%    If $\intOrder \geq t$ (high-order condition) and $\rk_{q^m}(\E)=t$ (full-rank condition), then
    If $\rk_{q^m}(\E)=t$ (full-rank condition), then
    \begin{equation}
        \label{eq:Fqm_kernel_E_B}
        \rker(\E)_{q^m}=\rker(\B)_{q^m}
    \end{equation}
    where $\B\in\Fq^{t \times n}$ is any basis for the $\Fq$-row space of $\E$ of the form~\eqref{eq:def_B}.
%    \fh{This is actually the thing we talked about today. Maybe we should try to use similar formulations here and in Lemma 1 in the end.}
%    \tj{I think we should not overcomplicate things here. We state that $\B$ must be of the form~\eqref{eq:def_B} which is of block-diagonal structure.}
%    \fh{That's totally fine. I just wanted to leave a comment in case we introduce a different notation to keep in mind to change it here, too. But since we didn't, nothing to do here. :-)}
    Further, it holds that
    \begin{equation}
        \rker(\E^{(i)})_{q^m}=\rker(\B^{(i)})_{q^m},\quad\forall i\in[1:\shots].
    \end{equation}
\end{lemma}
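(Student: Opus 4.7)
The plan is to exploit the block-diagonal decomposition $\E = \A \B$ from~\eqref{eq:error_decomp}--\eqref{eq:def_B}, together with the full-rank hypothesis, to turn the kernel equality into the statement that $\A$ (respectively $\A^{(i)}$) admits a left inverse over $\Fqm$.

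First I would establish the inclusion $\rker(\B)_{q^m} \subseteq \rker(\E)_{q^m}$, which is automatic: for any $\v \in \Fqm^n$ with $\B \v^\top = \0$, we immediately get $\E \v^\top = \A \B \v^\top = \0$. The reverse inclusion is the interesting part. Because $\B \in \Fq^{t \times n}$ has $\Fq$-rank $t$, its $\Fqm$-rank is also $t$, hence $t = \rk_{q^m}(\E) = \rk_{q^m}(\A\B) \leq \rk_{q^m}(\A) \leq t$. Thus $\A \in \Fqm^{\intOrder \times t}$ has full column rank $t$ over $\Fqm$ and therefore admits a left inverse $\A^{-L} \in \Fqm^{t \times \intOrder}$ with $\A^{-L}\A = \I_t$. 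Given any $\v$ with $\E \v^\top = \0$, multiplying $\E = \A \B$ on the left by $\A^{-L}$ yields $\B = \A^{-L} \E$, and hence $\B \v^\top = \A^{-L} \E \v^\top = \0$. This proves $\rker(\E)_{q^m} \subseteq \rker(\B)_{q^m}$ and establishes~\eqref{eq:Fqm_kernel_E_B}.

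For the blockwise statement, the key observation is that full column rank of $\A$ over $\Fqm$ is inherited by each block $\A^{(i)} \in \Fqm^{\intOrder \times t_i}$: the columns of $\A^{(i)}$ form a subset of the $t = \sum_i t_i$ $\Fqm$-linearly independent columns of $\A$, so $\rk_{q^m}(\A^{(i)}) = t_i$. Hence each $\A^{(i)}$ has a left inverse over $\Fqm$, and the same two-way argument as above applied to the block identity $\E^{(i)} = \A^{(i)} \B^{(i)}$ (which is exactly the $i$-th block equation produced by the block-diagonal shape of $\B$) yields $\rker(\E^{(i)})_{q^m} = \rker(\B^{(i)})_{q^m}$.

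I do not anticipate a serious obstacle: the only nontrivial ingredient is translating the full-rank condition $\rk_{q^m}(\E) = t$ into full column rank of $\A$ (and of each $\A^{(i)}$), and the rest is purely linear-algebraic. A minor point to mention carefully is why $\rk_{q^m}(\B) = t$ (needed to pin down $\rk_{q^m}(\A) = t$), which follows because $\B$ has $\Fq$-rank $t$ and rank does not decrease when passing from $\Fq$ to the extension field $\Fqm$.
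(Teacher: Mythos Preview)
Your proposal is correct and follows essentially the same approach as the paper: both deduce from $\rk_{q^m}(\E)=t$ that $\A$ has full column rank over $\Fqm$ (you phrase this via a left inverse, the paper via the trivial right kernel $\rker(\A)_{q^m}=\{\0\}$), which immediately yields $\rker(\E)_{q^m}=\rker(\B)_{q^m}$. For the blockwise statement the paper instead exploits the block-diagonal shape of $\B$ to split $\rker(\B)_{q^m}$ into the $\rker(\B^{(i)})_{q^m}$, whereas you rerun the same full-column-rank argument on each $\E^{(i)}=\A^{(i)}\B^{(i)}$; both routes are equivalent and your observation that each $\A^{(i)}$ inherits full column rank is exactly what makes the blockwise conclusion go through.
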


\begin{proof}
    Let $\E$ have $\Fqm$-rank $t$ and let $\E=\A\B$ be a decomposition of the error as in~\eqref{eq:error_decomp} such that $\E^{(i)}=\A^{(i)}\B^{(i)}$ for all $i\in[1:\shots]$.
    Since $\rk_{q^m}(\E)=t$ implies that $\rk_{q^m}(\A)=t$, we have that $\rker(\A)_{q^m}=\{\0\}$.
    Hence, for all $\v\in\Fqm^{n}$, $(\A\B)\v^\top=\0$ if and only if $\B\v^\top=\0$ which is equivalent to
    \begin{equation}
        \label{eq:ker_B}
        \rker(\E)_{q^m}=\rker(\A\B)_{q^m}=\rker(\B)_{q^m}.
    \end{equation}
    Assume a vector $\v=(\v^{(1)} \,|\, \v^{(2)} \,|\, \dots \,|\, \v^{(\shots)})\in\Fqm^n$ and let $\v^{(i)}\in\Fqm^{n_i}$ be any element in $\rker(\B^{(i)})_{q^m}$.
%    \fh{overfull hbox in previous line}\tj{I tried rewrite the sentence. Is overfull hbox fixed now?}
%    \fh{Yes, that's great. But is the meaning still the same? We would need to choose the $\v^{(i)}$ in the kernel of $\B^{(i)}$ and not $\B$, right?}
    Due to the block-diagonal structure of $\B$ (see~\eqref{eq:def_B}) we have that
    \begin{equation}
        \B\v^\top = \0
        \quad\Longleftrightarrow\quad
        \B^{(i)}(\v^{(i)})^\top = \0, \quad\forall i\in[1:\shots]
    \end{equation}
    which is equivalent to
    \begin{equation}
        \label{eq:ker_Bi}
        \v\in\rker(\B)_{q^m}
        \quad\Longleftrightarrow\quad
        \v^{(i)}\in\rker(\B^{(i)})_{q^m},\quad\forall i\in[1:\shots].
    \end{equation}
    Combining~\eqref{eq:ker_B} and~\eqref{eq:ker_Bi} yields the result.
\qed
\end{proof}

Combining Lemma~\ref{lem:row_space_H_sub} and Lemma~\ref{lem:Fqm_kernel_E_B} finally allows us to recover the sum-rank support of $\E$.

\begin{theorem}
    \label{thm:sum_rank_support_recovery}
    Let $\E=(\E^{(1)} \, | \, \E^{(2)} \, | \, \dots \, | \, \E^{(\shots)})\in\Fqm^{\intOrder \times n}$ be an error of sum-rank weight $\SumRankWeight(\E)=t \leq d-2$ where $\E^{(i)}\in\Fqm^{\intOrder \times n_i}$ with $\rk_q(\E^{(i)})=t_i$ for all $i\in[1:\shots]$.
    If $\intOrder \geq t$ (high-order condition) and $\rk_{q^m}(\E)=t$ (full-rank condition), then
    \begin{equation}
        \label{eq:row_space_Ei}
        \Rowspace{\E^{(i)}}=\rker\left(\ext(\Hsub^{(i)})\right)_q, \quad\forall i\in[1:\shots].
    \end{equation}
\end{theorem}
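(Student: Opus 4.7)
My plan is to prove the two inclusions separately, with the reverse direction being the main work. For the easy inclusion $\Rowspace{\E^{(i)}} \subseteq \rker(\ext(\Hsub^{(i)}))_q$, I would combine Lemma~\ref{lem:row_space_H_sub} and Lemma~\ref{lem:Fqm_kernel_E_B} to see that every row $\h = (\h^{(1)} \mid \dots \mid \h^{(\shots)})$ of $\Hsub$ lies in $\rker(\B)_{q^m}$. The block-diagonal form of $\B$ in~\eqref{eq:def_B} then forces $\h^{(i)} \in \rker(\B^{(i)})_{q^m}$ for each $i$, which is equivalent to $\Hsub^{(i)}(\B^{(i)})^\top = \0$. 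Hence every row of $\B^{(i)}$ lies in $\rker(\ext(\Hsub^{(i)}))_q$, and taking $\Fq$-spans yields $\Rowspace{\E^{(i)}} = \Rowspace{\B^{(i)}} \subseteq \rker(\ext(\Hsub^{(i)}))_q$.

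For the reverse inclusion, I would pick an arbitrary $\v^{(i)} \in \rker(\ext(\Hsub^{(i)}))_q$ and embed it into $\Fq^n$ as $\v \defeq (\0 \mid \dots \mid \v^{(i)} \mid \dots \mid \0)$ with zeros outside the $i$-th block. By construction $\Hsub \v^\top = \Hsub^{(i)}(\v^{(i)})^\top = \0$, so $\v \in \rker(\Hsub)_{q^m}$. Taking orthogonal complements (in $\Fqm^n$) of the identity $\RowspaceFqm{\Hsub} = \rker(\B)_{q^m} \cap \mycode{C}^\perp$ from Lemmas~\ref{lem:row_space_H_sub} and~\ref{lem:Fqm_kernel_E_B}, and using $\rker(\B)_{q^m}^\perp = \RowspaceFqm{\B}$ (which holds because $\B$ has entries in $\Fq$) together with $(\mycode{C}^\perp)^\perp = \mycode{C}$, I obtain $\rker(\Hsub)_{q^m} = \RowspaceFqm{\B} + \mycode{C}$. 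Consequently $\v = \u + \w$ with $\u \in \RowspaceFqm{\B}$ and $\w \in \mycode{C}$.

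The crux of the proof, which I expect to be the main obstacle, is a sum-rank weight argument forcing $\w = \0$. Block-by-block, $\w^{(i)} = \v^{(i)} - \u^{(i)}$ satisfies $\rkq(\w^{(i)}) \leq \rkq(\v^{(i)}) + \rkq(\u^{(i)}) \leq 1 + t_i$, where the $+1$ stems from $\v^{(i)} \in \Fq^{n_i}$ and the $t_i$ from $\u^{(i)} \in \RowspaceFqm{\B^{(i)}}$. For $j \neq i$, $\w^{(j)} = -\u^{(j)}$ has $\rkq(\w^{(j)}) \leq t_j$. Summing over all blocks yields $\SumRankWeight(\w) \leq 1 + t \leq d - 1 < d$, so the minimum distance of $\mycode{C}$ forces $\w = \0$. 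Note that this step uses the hypothesis $t \leq d - 2$ tightly; a weaker assumption $t \leq d - 1$ would fail by exactly one.

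It then only remains to read off the conclusion: $\v = \u \in \RowspaceFqm{\B} \cap \Fq^n = \Rowspace{\B}$, where the last equality holds because $\B$ has $\Fq$-entries. The block-diagonal structure of $\B$ decomposes $\Rowspace{\B}$ as the direct sum of the $\Fq$-subspaces $\Rowspace{\B^{(j)}} \subseteq \Fq^{n_j}$ sitting in the corresponding block of $\Fq^n$. Since $\v$ is supported only on block $i$, this forces $\v^{(i)} \in \Rowspace{\B^{(i)}} = \Rowspace{\E^{(i)}}$, completing the reverse inclusion and thus the proof of~\eqref{eq:row_space_Ei}.
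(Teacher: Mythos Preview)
Your proof is correct and takes a genuinely different route from the paper's.

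For the easy inclusion, both you and the paper argue essentially the same way via Lemma~\ref{lem:row_space_H_sub} and Lemma~\ref{lem:Fqm_kernel_E_B}. The difference lies in the reverse inclusion. The paper argues by \emph{dimension counting}: it shows $r_i \defeq \dim_q\bigl(\Rowspace{\ext(\Hsub^{(i)})}\bigr) \leq n_i - t_i$ from the easy inclusion, then assumes $r = \sum_i r_i < n - t$ and derives a contradiction. Concretely, it builds full-rank block matrices ${\B^{(i)}}'$ via basis extension, multiplies $\Hsub$ on the right to exhibit $t+1$ all-zero columns, and then invokes the parity-check-matrix characterisation of the minimum distance (any $d-1$ columns of $\H$ are $\Fqm$-linearly independent, cf.~\cite[Lemma~12]{puchinger2020generic}) to produce a vector $\g$ that should simultaneously lie in $\Rowspace{\ext(\Hsub)}$ yet violate the all-zero-column pattern. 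This is a somewhat intricate matrix construction.

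You instead dualise the identity $\RowspaceFqm{\Hsub} = \rker(\B)_{q^m} \cap \mycode{C}^\perp$ to obtain $\rker(\Hsub)_{q^m} = \RowspaceFqm{\B} + \mycode{C}$, and then use the \emph{codeword} characterisation of the minimum distance directly: any $\v^{(i)}$ in the purported larger space, once padded with zeros, decomposes as $\u + \w$, and a clean sum-rank estimate $\SumRankWeight(\w) \leq 1 + t \leq d-1$ forces $\w = \0$. This is shorter, more conceptual, and makes the role of the hypothesis $t \leq d-2$ completely transparent (it is used exactly once, and tightly). The paper's approach, on the other hand, stays closer to the matrix machinery that drives the actual algorithm and mirrors the original Metzner--Kapturowski arguments in the Hamming and rank metrics, which may be why the authors chose it.
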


\begin{proof}

%    From Lemma~\ref{lem:row_space_H_sub} we get
%        \begin{equation}\label{eq:proofeq1}
%            \RowspaceFqm{\Hsub}=\rker(\E)_{q^m} \cap \mycode{C}^\perp.
%        \end{equation}
%        Since $\rk_{q^m}(\E)=t$ and $\intOrder \geq t$ we get from Lemma~\ref{lem:Fqm_kernel_E_B} (see~\eqref{eq:Fqm_kernel_E_B}) that
%        \begin{equation}\label{eq:proofeq2}
%            \rker(\E)_{q^m}=\rker(\B)_{q^m}.
%        \end{equation}
%        By combining~\eqref{eq:proofeq1} and~\eqref{eq:proofeq2} we get
%        \begin{equation}\label{eq:combine_lemmas}
%            \RowspaceFqm{\Hsub}=\rker(\B)_{q^m} \cap \RowspaceFqm{\H}.
%        \end{equation}
%    \fh{Maybe add something like "which we will use later on." to make clear why we write this down here.}
%    \fh{Why do we have the above part of the proof? As far as I understand, it doesn't prove any statement from the theorem. Do we use it anywhere?
%    --- Okay, I get it now after rereading the rank-metric paper. Can we somehow make clearer that these results imply that the rest of our proof actually proves the statements from the theorem?}
    In the following, we prove that the $\Fq$-row space of the extended $\Hsub$ instead of the $\Fqm$-row space of $\Hsub$ is equal to the $\Fq$-kernel of $\B$, i.e.,
    \begin{equation}
        \Rowspace{\ext(\Hsub)} = \rker(\B)_q.
    \end{equation}

%    \fh{Good, but I think we're still missing the fact that we show it blockwise and then argue via the dual spaces. What do you think? (We could also add it at the end of the proof if it gets too long here.)}
%    \tj{Maybe you can add the blockwise part if you want to make it more clear.}
%    \fh{Here's a suggestion (that doesn't necessarily need your comment but I leave it up to you if we still want to include it):}
    Recall that $\Rowspace{\E^{(i)}} = \Rowspace{\B^{(i)}}$ holds for all $i \in [1:\shots]$ according to the definition of the error decomposition~\eqref{eq:error_decomp}.
    With this in mind, the statement of the theorem is equivalent to showing $\rker(\B^{(i)})_q = \Rowspace{\ext(\Hsub^{(i)})}$ for all $i\in[1:\shots]$ since $\Rowspace{\B^{(i)}}^{\perp} = \rker(\B^{(i)})_q$ and $\rker\left(\ext(\Hsub^{(i)})\right)_q^{\perp} = \Rowspace{\ext(\Hsub^{(i)})}$ hold.

    First we show that $\Rowspace{\ext(\Hsub)} \subseteq \rker(\B)_q$ which, due to the block-diagonal structure of $\B$, implies that $\Rowspace{\ext(\Hsub^{(i)})} \subseteq \rker(\B^{(i)})_q$ for all $i\in[1:\shots]$.
    Let $\v=(\v^{(1)} \,|\, \v^{(2)} \,|\, \dots \,|\, \v^{(\shots)})\in\Fqm^n$ with $\v^{(i)}\in\Fqm^{n_i}$ for all $i\in[1:\shots]$ be any element in the $\Fqm$-linear row space of $\Hsub$.
    Then, by~\cite[Lemma~5]{puchinger2019decoding} we have that each row $\v_j$ for $j\in[1:m]$ of $\ext(\v)$ is in $\Rowspace{\ext(\Hsub)}$ which implies that $\v_j^{(i)}\in\Rowspace{\ext(\Hsub^{(i)})}$ for all $i\in[1:\shots]$.
    By Lemma~\ref{lem:Fqm_kernel_E_B} we have that $\v\in\rker(\B)_{q^m}$, i.e. we have
    \begin{equation}
        \B\v^\top=\0
        \quad\Longleftrightarrow\quad
        \B^{(i)}(\v^{(i)})^\top=\0,\quad \forall i\in[1:\shots]
    \end{equation}
    where the right-hand side follows from the block-diagonal structure of $\B$.
    Since the entries of $\B$ are from $\Fq$, we have that
    \begin{equation}
        \label{eq:kernel_B}
        \ext(\B\v^\top)=\B\ext(\v)^\top=\0
    \end{equation}
    which implies that $\v\in\rker(\B)_q$ and thus $\Rowspace{\ext(\Hsub)} \subseteq \rker(\B)_q$.
    Due to the block-diagonal structure of $\B$ we get from~\eqref{eq:kernel_B} that
    \begin{equation}
        \label{eq:kernel_B_i}
        \ext(\B^{(i)}(\v^{(i)})^\top)=\B^{(i)}\ext(\v^{(i)})^\top=\0,
        \quad\forall i\in[1:\shots]
    \end{equation}
    which implies that $\v_j^{(i)}\in\rker(\B^{(i)})_q$ for all $i\in[1:\shots]$ and $j\in[1:m]$.
    Therefore, we have that $\Rowspace{\ext(\Hsub^{(i)})} \subseteq \rker(\B^{(i)})_q$, for all $i\in[1:\shots]$.

     Next, we show that $\rker(\B^{(i)})_q = \Rowspace{\ext(\Hsub^{(i)})}$ for all $i\in[1:\shots]$ by showing that
        \begin{align}
            r_i&\defeq\dim\left(\Rowspace{\ext(\Hsub^{(i)})}\right) = n_i - t_i,\quad\forall i\in[1:\shots].
        \end{align}
        Since $\Rowspace{\ext(\Hsub^{(i)})} \subseteq \rker(\B^{(i)})_q$ we have that $r_i > n_i - t_i$ is not possible for all $i\in[1:\shots]$.
%    \fh{Should be "any" I think.}\tj{I think its all here. The any part comes later with the contradiction. But I'll think about it.} \fh{I think the confusion comes from the negation in the sentence. As far as I understand "not possible for all" means it might be possible for some. But that's really a language detail for which we could use a native speaker's opinion...}

       In the following we show that $r < n-t$ is not possible and therefore $r_i = n_i - t_i$ holds for all $i=[1:\shots]$.
        Let $\{\h_1, \ldots, \h_r\}\subseteq\Fq^{n}$ be a basis for $\Rowspace{\ext(\Hsub)}$ and define
        \begin{equation}
            \H_{b} =
            \begin{pmatrix}
            {\h_1}
                \\
                {\h_2}
                \\
                \vdots
                \\
                {\h_{r}}
            \end{pmatrix}
            \in\Fq^{r \times n}
        \end{equation}
        with $\h_j = (\h_j^{(1)}\, | \, \h_j^{(2)}\, | \, \dots\, | \, \h_j^{(\ell)})\in\Fq^n$
%    \fh{Is there a reason why we don't use $\mid$ to divide the blocks here?}\tj{I added them but then to be consistent, one should add also bars when stacking matrices vertically? or both like when introducing $\J$}
%    \fh{I thought about the stacking, too. But I think it's fine if we don't do it for that. And for $\J$: I accepted for myself that we only do it when we have the division into $\shots$ blocks but yeah, it's not perfect. I don't know what's the best approach.}
    where $\h_j^{(i)} \in \Rowspace{\ext(\Hsub^{(i)})}$ for $j\in[1:r]$ and $i\in[1:\shots]$.
        Also define
        \begin{equation}
        {\H_{b}^{(i)}}
            =
            \begin{pmatrix}
            {\h_1^{(i)}}
                \\
                {\h_2^{(i)}}
                \\
                \vdots
                \\
                {\h_{r}^{(i)}}
            \end{pmatrix}
            \in\Fq^{r \times n_i},
            \quad \forall i\in[1:\shots].
        \end{equation}
        % \tj{Note: it is $r \times n_i$ now not $r_i \times n_i$!}
%        \fh{Can we integrate this equation into the last one?}

        By the basis-extension theorem, there exist matrices ${\B^{(i)}}''\in\Fq^{(n_i-t_i) \times n_i}$ such that the matrices
        \begin{equation}
        {\B^{(i)}}
            '\defeq
            \begin{pmatrix}
                \left(\B^{(i)}\right)^\top \, | \,  \left({\B^{(i)}}''\right)^\top
            \end{pmatrix}\in\Fq^{n_i \times n_i}
        \end{equation}
        have $\Fq$-rank $n_i$ for all $i\in[1:\shots]$.

%        \fh{General remark for the rest of the proof: Is the notation $\H'$ confusing because we also use it for the division of $\H$ into $\H'$ and $\H_{sub}$?}\tj{Yeah good point. I'm thinking about changing $H'$ for the division of $\H$ into sth. else. Maybe $\H_{\text{sup}}$? :D Probably also changing $\S'$ accordingly.}
%        \fh{$\H_{sup}$ is the best idea ever. :D What about $\bar{\H}$? And just two thoughts about whether we should change the notation here in this proof or the overall notation for the division of $\H$: currently, the division sticks to the same notation as the rank-metric paper. I think that's good. Moreover, changing the proof only is less work and less prone to introducing inconsistencies.}\tj{They make the same mistake in the rank-metric paper. To be consistent is good but not if it means to copy mistakes. Also it doesn't hurt to give the paper our own flavor and not copy pasta too much. I'd rather change the above part so either $\H_\text{sup}$ or with bar above. But just my opinion.}
%        \fh{We would make it consistent either way. But you can of course also change the division of $\H$, why not. :-) I just wouldn't use $\H_{sup}$ and $\H_{sub}$ because it's too close and thus probably very confusing to the readers.}
        Next define ${\check{\H}^{(i)}} = \H_b^{(i)} {\B^{(i)}}' \in \Fq^{r \times n_i}$ for all $i\in[1:\shots]$ and
        \begin{equation}
            \check{\H} \defeq \begin{pmatrix} {\check{\H}^{(1)}}
                       \, | \,  {\check{\H}^{(2)}} \, | \, \dots \, | \,  {\check{\H}^{(\ell)}}
            \end{pmatrix} = \H_b \cdot \diag{\left({\B^{(1)}}',\ldots,{\B^{(\ell)}}'\right)}.
        \end{equation}
%        \fh{We usually use $\mid$ to separate matrices. Should we add that for $\H'$?}

        Since $\h_1^{(i)}, \h_2^{(i)},\dots,\h_{r}^{(i)}$ are in the right $\Fq$-kernel of $\B^{(i)}$ (see~\eqref{eq:kernel_B_i}) we have that
        \begin{align}
        {\check{\H}^{(i)}} =
            \begin{pmatrix}
                0 & \dots & 0 & {\check{h}_{1,t_i+1}^{(i)}} & \dots & {\check{h}_{1,n_i}^{(i)}}
                \\
                \vdots & \ddots & \vdots & \vdots & \ddots & \vdots
                \\
                0 & \dots & 0 & {\check{h}_{r,t_i+1}^{(i)}} & \dots & {\check{h}_{r,n_i}^{(i)}}
            \end{pmatrix}
        \end{align}
        for all $i\in[1:\ell]$ and thus $\check{\H}$ has at least $t = \sum_{i=1}^{\ell} t_i$ all-zero columns.

        By the assumption that $r < n- t$ it follows that $r_i < n_i - t_i$ holds for at least one block.
        Without loss of generality assume that this holds for the $\ell$-th block, i.e. we have $r_\ell < n_\ell - t_\ell$.
        Then there exists a full-rank matrix
        \begin{equation}
%            \J = \begin{pmatrix}
%                     \I_{t_\ell} & \bm{0}         \\
%                     \bm{0}      & \widetilde{\J}
%            \end{pmatrix}
            \J = \mleft(\begin{array}{c|c}
                 \I_{t_\shots} & \bm{0} \\\hline\\[-2.0ex]
                 \bm{0} & \widetilde{\J}
        \end{array}\mright)
            \in\Fq^{n_\shots \times n_\shots}
        \end{equation}
        with $\widetilde{\J} \in \Fq^{(n_\ell-t_\ell)\times(n_\ell-t_\ell)}$ such that the matrix
        \begin{equation}
            \label{eq:hastplusonezero}
            \widetilde{\H} = \check{\H} \cdot \diag{\left(\I_{n_1}, \ldots,\I_{n_{\shots - 1}},\J\right)}
        \end{equation}
        % with
        % \begin{equation}
        %     \J = \begin{pmatrix}
        %              \I_{t_\ell} & \bm{0}         \\
        %              \bm{0}      & \widetilde{\J}
        %     \end{pmatrix}
        % \end{equation}
        % and $\widetilde{\H}$
        has at least $t+1$ all-zero columns.

        Define $\D \defeq \diag{\left({\B^{(1)}}',\ldots, {\B^{(\ell-1)}}',{\B^{(\ell)}}' \J\right)} \in \Fq^{n \times n}$ which has full $\Fq$-rank $n$.
        Then we have that $\widetilde{\H} = \H_b \cdot \D$.
        Since $\D$ has full $\Fq$-rank $n$, the submatrix $\D' \defeq \D_{[1:n],\mathcal{I}} \in \Fq^{n \times (t+1)}$ has $\Fq$-rank $t+1$, where
%        \begin{equation}
%            \mathcal{I} = \{1,\ldots,t_1, n_1+1,\ldots,n_1+t_2,n_{\ell-2}+1,\ldots,n_{\ell-2}+t_{\ell-1},n_{\ell-1}+1,\ldots,n_{\ell-1}+t_{\ell}+1\}.
%        \end{equation}
%        \fh{This should be the same as
        \begin{equation}
            \mathcal{I} = [1:t_1] \cup [n_1+1:n_1+t_2] \cup [n_{\ell-2}+1:n_{\ell-2}+t_{\ell-1}] \cup [n_{\ell-1}+1:n_{\ell-1}+t_{\ell}+1]
        \end{equation}
%        which is shorter. :)
%        }
        By~\eqref{eq:hastplusonezero} it follows that
        \begin{equation}\label{eq:mustbeallzero}
            \h_j \cdot \D' = \0 \in \Fq^{t+1} %(0,\ldots,0) \in \Fq^{t+1}
        \end{equation}
%        \fh{We've used $\0$ to indicate an all-zero vector before. Do we want to do it here as well?}
        for all $j\in[1:r]$.
        Since $\H \in \Fqm^{(n-k) \times n}$ is a parity-check matrix of an $[\n,k,d]$ code it has at most $d-1$ $\Fqm$-linearly dependent columns (see~\cite[Lemma~12]{puchinger2020generic}).
        Since by assumption $t+1 \leq d-1$ and $\rkq(\D') = t+1$ we have that $\rkqm (\H \D') = t + 1$.
        Thus, there exists a vector $\g \in \RowspaceFqm{\H}$ such that
        \begin{equation}
            \g \D' =  (\0, g_{t+1}') \in \Fqm^{t+1}
        \end{equation}
%        \fh{Same as above: We've used $\0$ to indicate an all-zero vector before. Do we want to do it here as well?}
        with $g_{t+1}'\neq 0$. Since the first $t$ positions of $\g\D'$ are equal to zero we have that $\g \in \RowspaceFqm{\Hsub}$.
    Expanding the vector $\g\D'$ over $\Fq$ gives
        \begin{equation}
            \ext(\g)\D' = \begin{pmatrix}
                              \0 & g_{1,t+1}' \\
                              \0 & g_{2,t+1}' \\
                              \vdots & \vdots \\
                              \0 & g_{m,t+1}'
            \end{pmatrix} \in \Fq^{m\times(t+1)}
        \end{equation}
        where $\ext(g_{t+1}') = (g_{1,t+1}',g_{2,t+1}',\ldots,g_{m,t+1}')^\top \in \Fq^{m\times 1}$.
        Since $g_{t+1}'\neq 0$ there exists at least one row with index $\iota$ in $\ext(\g_{t+1}')$ such that $g_{\iota, t+1}'\neq 0$.
        Let $\g_\iota$ be the row in $\ext(\g)$ for which $\g_\iota\D'$ is not all-zero.
        This leads to a contradiction according to~\eqref{eq:mustbeallzero}.
        Thus $r < n - t$ is not possible and leaves $r = n - t$ and therefore also $r_i = n_i - t_i$ for all $i\in[1:\shots]$ as the only valid option.
        \qed
\end{proof}

\subsection{A \MeKap-like Decoding Algorithm}

Using Theorem~\ref{thm:sum_rank_support_recovery} we can formulate an efficient decoding algorithm for high-order interleaved sum-rank-metric codes.
The algorithm is given in Algorithm~\ref{alg:decode_high_order_int_sum_rank} and proceeds similar to the \MeKap(-like) decoding algorithms for Hamming- or rank-metric codes.
As soon as $\Hsub$ is computed from the syndrome matrix $\S$, the rank support  of each block can be recovered independently using the results from Theorem~\ref{thm:sum_rank_support_recovery}.
%\fh{Suggestion: "the rank support of each block can be recovered independently using ...".}, which \fh{I'd start a new sentence here.}
This corresponds to finding a matrix $\B^{(i)}$ with $\rk_q(\B^{(i)})=t_i$ such that $\ext(\Hsub^{(i)})(\B^{(i)})^\top = \0$ for all $i\in[1:\ell]$ (see~\eqref{eq:row_space_Ei}).

\begin{algorithm}[ht!]
    \setstretch{1.35}
    \caption{Decoding High-Order Interleaved Sum-Rank-Metric Codes}\label{alg:decode_high_order_int_sum_rank}
    \SetKwInOut{Input}{Input}\SetKwInOut{Output}{Output}

    \Input{Parity-check matrix $\H$, Received word $\Y$}

    \Output{Transmitted codeword $\C$}
    \BlankLine

    $\S \gets \H\Y^\top \in \Fqm^{(n-k) \times \intOrder}$\label{step:syndromeMatrix}
    \\
    Compute $\P\in\Fqm^{(n-k) \times (n-k)}$ s.t. $\P\S=\REF(\S)$\label{step:transfP}
    \\
    $\Hsub=\left(\Hsub^{(1)} \, | \, \Hsub^{(2)} \, | \, \dots \, | \, \Hsub^{(\shots)} \right) \gets (\P\H)_{[t+1:n-k],[1:n]}\in\Fqm^{(n-t-k) \times n}$ \label{step:Hsub}
    \\
    \For{$i=1,\dots,\shots$}{
        Compute $\B^{(i)}\in\Fq^{t_i \times n_i}$ s.t. $\ext(\Hsub^{(i)})(\B^{(i)})^\top = \0$ and $\rk_q(\B^{(i)})=t_i$\label{step:getBi}
    }
    $\B \gets \diag(\B^{(1)}, \B^{(2)}, \dots, \B^{(\shots)})\in\Fq^{t \times n}$ \label{step:blockdiaB}
    \\
    Compute $\A\in\Fqm^{\intOrder \times t}$ s.t. $(\H\B^\top)\A^\top=\S$ \label{step:col_er_dec}
    \\
    $\C \gets \Y - \A\B\in\Fqm^{\intOrder \times n}$ \label{step:comp_E}
    \\
    \Return{$\C$}\label{step:return}
\end{algorithm}

\begin{theorem}
    Let $\C$ be a codeword of a homogeneous $\intOrder$-interleaved sum-rank-metric code $\mycode{IC}[\intOrder;\n,k,d]$ of minimum sum-rank distance $d$.
    Furthermore, let $\E \in \Fqm^{\intOrder \times n}$ be an error matrix of sum-rank weight $\SumRankWeight(\E)=t \leq d-2$ that fulfills $t\leq\intOrder$ (\emph{high-order condition})
    and $\rkqm(\E)=t$ (\emph{full-rank condition}). Then $\C$ can be uniquely recovered from the received word $\Y = \C + \E$ using Algorithm~\ref{alg:decode_high_order_int_sum_rank} in
    a time complexity equivalent to
    \begin{equation*}
        \oh{\max\{n^3, n^2\intOrder\}}
    \end{equation*}
    operations in $\Fqm$.
\end{theorem}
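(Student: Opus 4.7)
The plan is to verify correctness of Algorithm~\ref{alg:decode_high_order_int_sum_rank} by chaining together Lemma~\ref{lem:row_space_H_sub}, Theorem~\ref{thm:sum_rank_support_recovery}, and Lemma~\ref{lem:col_erasure_decoding} step by step, and then to bound the arithmetic cost of each line.

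For correctness, I would first observe that Steps~\ref{step:syndromeMatrix}--\ref{step:Hsub} yield exactly the matrix $\Hsub$ of Lemma~\ref{lem:row_space_H_sub}. Because $\H\C^\top = \0$, Step~\ref{step:syndromeMatrix} produces $\S = \H\E^\top$; since the full-rank condition forces $\rkqm(\S)=t$, the echelon form $\P\S$ has precisely $n{-}k{-}t$ zero rows, so the row range $[t{+}1:n{-}k]$ used in Step~\ref{step:Hsub} picks out exactly the rows of $\P\H$ corresponding to those zero rows. Theorem~\ref{thm:sum_rank_support_recovery} then guarantees $\Rowspace{\E^{(i)}}=\rker(\ext(\Hsub^{(i)}))_q$ for each $i$, so any full-$q$-rank $\B^{(i)}$ computed in Step~\ref{step:getBi} is a basis of the rank support of the $i$-th block. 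Consequently, the block-diagonal $\B$ assembled in Step~\ref{step:blockdiaB} has the form of~\eqref{eq:def_B} and is a basis for the $\Fq$-row space of $\E$. Lemma~\ref{lem:col_erasure_decoding} can thus be invoked in Step~\ref{step:col_er_dec} to obtain the unique $\A$ with $\E=\A\B$, after which Step~\ref{step:comp_E} recovers $\C=\Y-\A\B$.

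For the complexity, I would bound each step in $\Fqm$-operations and sum the contributions. The dominant costs are: Step~\ref{step:syndromeMatrix} with $\OCompl{n^{2}\intOrder}$; computing $\P$, $\P\S$, and $\P\H$ in Steps~\ref{step:transfP}--\ref{step:Hsub} with $\OCompl{n^{3}+n^{2}\intOrder}$ via standard Gaussian elimination; the per-block kernel computations of Step~\ref{step:getBi}, which are performed over $\Fq$ on matrices of size $m(n{-}t{-}k)\times n_i$ and, summed over all blocks and re-expressed in $\Fqm$-operations through the standard basis representation, contribute $\OCompl{n^{3}}$; finally, Steps~\ref{step:col_er_dec}--\ref{step:comp_E}, which amount to forming $\H\B^\top$ (profiting from $\B$ being $\Fq$-valued), solving a full-rank $(n{-}k)\times t$ system with $\intOrder$ right-hand sides, and computing $\A\B$, together costing $\OCompl{n^{3}+n^{2}\intOrder}$. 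Using $t\leq n$, the overall bound is $\OCompl{\max\{n^{3},n^{2}\intOrder\}}$.

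The main subtlety will be the accounting in the kernel computations of Step~\ref{step:getBi}: these naturally live over $\Fq$, while the theorem is stated in $\Fqm$-operations. I will handle this by exploiting that one $\Fqm$-operation corresponds to $\Theta(m)$ $\Fq$-operations under the fixed basis $\vec{b}$, so the $\OCompl{mn^{3}}$ $\Fq$-operations from the kernel computations match the claimed $\OCompl{n^{3}}$ $\Fqm$-operations. Aside from this conversion, every remaining bound is a routine consequence of Gaussian elimination and matrix multiplication on the stated matrix sizes.
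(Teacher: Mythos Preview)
Your proposal is correct and follows essentially the same route as the paper: correctness is obtained by chaining Lemma~\ref{lem:row_space_H_sub}, Theorem~\ref{thm:sum_rank_support_recovery}, and Lemma~\ref{lem:col_erasure_decoding} in the same order, and the complexity is established by bounding each line of Algorithm~\ref{alg:decode_high_order_int_sum_rank} separately and summing. The only imprecision is your $\Fq$-to-$\Fqm$ conversion: a multiplication in $\Fqm$ costs $\Theta(m^2)$ naive $\Fq$-operations rather than $\Theta(m)$ (the paper sidesteps this by citing~\cite{Couveignes2009} to absorb an $\oh{n^3 m^2}$ $\Fq$-cost into $\oh{n^3}$ $\Fqm$-operations), but since the true conversion rate only decreases the resulting $\Fqm$-count, your stated $\oh{n^3}$ bound for Step~\ref{step:getBi} remains valid.
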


\begin{proof}
    By Lemma~\ref{lem:col_erasure_decoding} the error matrix $\E$ can be decomposed into $\E = \A\B$.
    Algorithm~\ref{alg:decode_high_order_int_sum_rank} first determines a basis of the error support $\SumRankSupp(\E)$ and then performs erasure decoding to obtain $\A$.
    The matrix $\B$ is computed by transforming $\S$ into row-echelon form using a transformation matrix $\P$ (see Line~\ref{step:transfP}).
    In Line~\ref{step:Hsub}, $\Hsub$ is obtained by choosing the last $n-k-t$ rows of $\P\H$.
    Then using Theorem~\ref{thm:sum_rank_support_recovery} for each block (see Line~\ref{step:getBi}) we find a matrix $\B^{(i)}$ whose rows form a basis for $\rker\left(\ext(\Hsub^{(i)})\right)_q$ and therefore a basis for $\RankSupp(\E^{(i)})$ for all $i\in[1:\shots]$.
    The matrix $\B$ is the block-diagonal matrix formed by $\B^{(i)}$ (cf.~\eqref{eq:def_B} and see Line~\ref{step:blockdiaB}).
    Finally, $\A$ can be computed from $\B$ and $\H$ using Lemma~\ref{lem:col_erasure_decoding} in Line~\ref{step:col_er_dec}.
    Hence, Algorithm~\ref{alg:decode_high_order_int_sum_rank} returns the transmitted codeword in Line~\ref{step:return}.
    The complexities of the lines in the algorithm are as follows:
    \begin{itemize}
        \item Line~\ref{step:syndromeMatrix}: The syndrome matrix $\S = \H \Y^\top$ can be computed in at most $\oh{n^2\intOrder}$ operations in $\Fqm$.
        \item Line~\ref{step:transfP}: The transformation of $[\S \,|\, \I]$ into row-echelon form requires \begin{equation*}\oh{(n-k)^2(\intOrder+n-k)}\subseteq\oh{\max\{n^3, n^2\intOrder\}}\end{equation*} operations in $\Fqm$.
        \item Line~\ref{step:Hsub}: The product $(\P\H)_{[t+1:n-k],[1:n]}$ can be computed requiring at most  $\oh{n(n-k-t)(n-k)}\subseteq\oh{n^3}$ operations in $\Fqm$.
        \item Line~\ref{step:getBi}: The transformation of $[\ext(\Hsub^{(i)})^\top \,|\, \I^\top]^\top$ into column-echelon form requires $\oh{n_i^2 ((n-k-t)m+n_i)}$ operations in $\Fq$ per block. Overall we get $\oh{\sum_{i=1}^{\shots} n_i^2 ((n-k-t)m+n_i)}\subseteq\oh{n^3 m}$ operations in $\Fq$ since we have that $\oh{\sum_{i=1}^{\shots} n_i^2}\subseteq\oh{n^2}$.
%        \item Line~\ref{step:blockdiaB}: \tj{Skip this?}
        \item Line~\ref{step:col_er_dec}: According to Lemma~\ref{lem:col_erasure_decoding}, this step can be done in $\oh{(n-k)^3 m^2}$ operations over $\Fq$.
%        \fh{Here's also an overful hbox in the previous line but it's at least not so bad.}
        \item Line~\ref{step:comp_E}: The product $\A\B = (\A^{(1)}\B^{(1)}\,|\,\A^{(2)}\B^{(2)}\,|\,\ldots\,|\,\A^{(\shots)}\B^{(\shots)})$ can be computed in $\sum_{i=1}^{\shots}\oh{\intOrder t_i n_i} \subseteq \oh{\intOrder n^2}$ and the difference of $\Y - \A\B$ can be computed in $\oh{\intOrder n}$ operations in $\Fqm$.
    \end{itemize}

    The complexities for {Line~\ref{step:getBi}} and Line~\ref{step:col_er_dec} are given for operations in $\Fq$.
    The number of $\Fq$-operations of both steps together is in $\oh{n^3 m^2}$ and their execution complexity can be bounded by $\oh{n^3}$ operations in $\Fqm$ (see~\cite{Couveignes2009}).
%    We prefer to give the time complexity in terms of $\Fqm$-operations even though some of the algorithm's operations have to be executed over $\Fq$.

%    \tj{Write here how the complexities are summarized and also how to approximate $\Fq \mapsto \Fqm$ complexity.}
%    We prefer to give the time complexity in terms of $\Fqm$-operations even though some of the algorithm's operations have to be executed over $\Fq$.
%    The number of $\Fq$-operations is in $\oh{n^3 m^2}$ and their execution complexity can be bounded by $\oh{n^3}$ operations in $\Fqm$ (see~\cite{Couveignes2009}).

    Thus, Algorithm~\ref{alg:decode_high_order_int_sum_rank} requires $\oh{\max\{n^3, n^2\intOrder\}}$ operations in $\Fqm$ and $\oh{n^3 m^2}$ operations in $\Fq$.
\qed
\end{proof}

%\tj{Write here also about the complexity of generic}
Note that the complexity of Algorithm~\ref{alg:decode_high_order_int_sum_rank} is not affected by the decoding complexity of the underlying constituent code since a generic code with no structure is assumed.
%\fh{I like that sentence. Thumbs up. :-)}

\begin{example}
    Let $\Fqm = \F_{5^2}$ with the primitive polynomial $x^2+4x+2$ and the primitive element $\alpha$ be given.
    Further let $\mycode{IC}[\intOrder;\n,k,d]$ be an interleaved sum-rank-metric code of length $n$ with $\n = (2,2,2)$, $k=2$, $d=5$, $\shots = 3$ and $\intOrder = 3$, defined by a generator matrix
    \begin{equation*}
        \G = \left(\begin{array}{cc|cc|cc}
                 \alpha^{4}  &  \alpha^{7}  & \alpha^{21} & \alpha^{4}  & \alpha^{3}  & \alpha^{5}  \\
                 \alpha^{20} &  \alpha^{11} & \alpha^{10} & \alpha^{21} & \alpha^{17} & \alpha^{3}
        \end{array}\right)
    \end{equation*}
    and a parity-check matrix
    \begin{equation*}
        \H = \left(
        \newcolumntype{C}[1]{>{\centering\arraybackslash}p{#1}}
        \begin{array}{C{13pt}C{13pt}|C{13pt}C{13pt}|cc}
                 {1}  &  {0}  & {0} & {0}  & \alpha^{8}  & \alpha^{19}  \\
                 {0}  &  {1}  & {0} & {0}  & \alpha^{5}  & \alpha^{12}  \\
                 {0}  &  {0}  & {1} & {0}  & \alpha^{17}  & \alpha^{}  \\
                 {0}  &  {0}  & {0} & {1}  & \alpha^{22}  & \alpha^{18}
        \end{array}\right).
    \end{equation*}
%    \fh{We could write $\alpha$ instead of $\alpha^1$ in the last column.}
    Suppose that the codeword
    \begin{equation*}
        \C = \left(\begin{array}{cc|cc|cc}
                 \alpha^{20} &  \alpha^{22} & 1           & \alpha^{6} & \alpha^{11} & \alpha^{10}  \\
                 \alpha^{23} &  \alpha^{7} & \alpha^{4} & 0          & \alpha^{17} & \alpha^{9}  \\
                 \alpha^{15} &  1          & \alpha^{22} & \alpha^{12} & \alpha^{22} & \alpha^{10}
        \end{array}\right) \in\mycode{IC}[\intOrder;\n,k,d]
    \end{equation*}
    is corrupted by the error
    \begin{equation*}
        \E = \left(
        \newcolumntype{C}[1]{>{\centering\arraybackslash}p{#1}}
        \begin{array}{cc|cc|C{13pt}C{13pt}}
                       \alpha^{19} &  \alpha^{} & \alpha^{6}  & \alpha^{9} & 0 & 0  \\
                       \alpha^{17} &  \alpha^{23} & \alpha^{10} & \alpha^{7}   & 0 & 0  \\
                       \alpha^{2} &  \alpha^{8} & \alpha^{15} & \alpha^{6} & 0 & 0
        \end{array}\right)
    \end{equation*}
%    \fh{We could write $\alpha$ instead of $\alpha^1$ in the first row.}
    with $\SumRankWeight(\E)= \rk_{q^m}(\E) =  t = 3$ and $t_1 = 1, t_2=2$ and $t_3 = 0$.
    The resulting received matrix is
    \begin{equation*}
        \Y = \left(\begin{array}{cc|cc|cc}
                       \alpha^{17} &  \alpha^{8} & \alpha^{18} & \alpha^{16} & \alpha^{11} & \alpha^{10}  \\
                       \alpha^{11} &  \alpha^{3} & \alpha^{22} & \alpha^{7} & \alpha^{17} & \alpha^{9}  \\
                       \alpha^{7} &  \alpha^{4}  & \alpha^{23} & 1 & \alpha^{22} & \alpha^{10}
        \end{array}\right).
    \end{equation*}
    First the syndrome matrix is computed as
    \begin{equation*}
        \S = \H\Y^{\top} = \mleft(\begin{array}{ccc}
                       \alpha^{19} &  \alpha^{17} & \alpha^{2} \\
                       \alpha^{} &  \alpha^{23} & \alpha^{8} \\
                       \alpha^{6} &  \alpha^{10}  & \alpha^{15} \\
                       \alpha^{9} &  \alpha^{7}  & \alpha^{6}
        \end{array}\mright)
    \end{equation*}
%    \fh{We could write $\alpha$ instead of $\alpha^1$ in the first column.}
    and then $\P$
    \begin{equation*}
        \P = \left(
        \newcolumntype{C}[1]{>{\centering\arraybackslash}p{#1}}
        \begin{array}{C{13pt}ccc}
                        0 & \alpha^{2} & \alpha^{6} & \alpha^{8} \\
                        0 & \alpha^{4} & \alpha^{20} & \alpha^{15} \\
                        0 & \alpha^{23} & 0 & \alpha^{3} \\
                        1 & \alpha^{6} & 0 & 0 \\
        \end{array} \right) \Longrightarrow \P\S = \left(
        \newcolumntype{C}[1]{>{\centering\arraybackslash}p{#1}}
        \begin{array}{C{13pt}C{13pt}C{13pt}C{13pt}}
                       1 & 0 & 0 \\
                       0 & 1 & 0 \\
                       0 & 0 & 1 \\
                       0 & 0 & 0 \\
        \end{array}\right)
    \end{equation*}
    with $\rk_{q^m}(\P)=4$.
    The last $n-k-t = 1$ rows of
    \begin{equation*}
        \P\H = \left(
        \newcolumntype{C}[1]{>{\centering\arraybackslash}p{#1}}
        \begin{array}{C{13pt}c|cc|cc}
            0 & \alpha^{2} & \alpha^{6} & \alpha^{8} & \alpha^{13} & \alpha^{7}  \\
            0 & \alpha^{4} & \alpha^{20} & \alpha^{15} & \alpha^{22} & \alpha^{16} \\
            0 & \alpha^{23} & 0 & \alpha^{3} & \alpha^{11} & 1 \\
            1 & \alpha^{6} & 0 & 0 & \alpha^{18} & \alpha^{16}\\
        \end{array} \right)
    \end{equation*}
    gives us $\Hsub = (1 \,\,\, \alpha^{6}\,|\,0\,\,\,0\,|\,\alpha^{18}\,\,\,\alpha^{16})$.
    We expand every block of $\Hsub$ over $\F_{5}$ and get
    \begin{equation*}
        \ext(\Hsub^{(1)}) = \begin{pmatrix}
        1 & 2 \\
        0 & 0
        \end{pmatrix}, \quad
        \ext(\Hsub^{(2)}) = \begin{pmatrix}
                   0 & 0 \\
                   0 & 0
        \end{pmatrix} \quad \text{and} \quad
        \ext(\Hsub^{(3)}) = \begin{pmatrix}
                   3 & 3 \\
                   0 & 3
        \end{pmatrix}.
    \end{equation*}
    We observe that the second block $\Hsub^{(2)}$ is zero which corresponds to a full-rank error.
    Next we compute a basis for each of the right kernels of $\ext(\Hsub^{(1)})$, $\ext(\Hsub^{(2)})$, and $\ext(\Hsub^{(3)})$ which gives us
    \begin{equation*}
        \B^{(1)} = \begin{pmatrix}
                       1 & 2
        \end{pmatrix}, \quad
        \B^{(2)} = \begin{pmatrix}
                       1 & 0 \\
                       0 & 1
        \end{pmatrix}, \quad
        \B^{(3)} = (),
    \end{equation*}
    where $\B^{(3)}$ is empty since $\ext(\Hsub^{(3)})$ has full rank.
    The matrix $\B$ is then given by
    \begin{equation*}
        \B = \diag(\B^{(1)},\B^{(2)},\B^{(3)})
           = \mleft(\begin{array}{cc|cc|cc}
                 1 & 2 & 0 & 0 & 0 & 0 \\\hline
                 0 & 0 & 1 & 0 & 0 & 0 \\
                 0 & 0 & 0 & 1 & 0 & 0 \\\hline
        \end{array}\mright).
    \end{equation*}
    Solving for $\A$
    \begin{align*}
        \H\B^\top\A^\top &= \S \\
        \begin{pmatrix}
            1 & 0 & 0 \\
            \alpha^{6} & 0 & 0 \\
            0 & 1 & 0 \\
            0 & 0 & 1
        \end{pmatrix}\A^\top &= \mleft(\begin{array}{ccc}
                       \alpha^{19} &  \alpha^{17} & \alpha^{2} \\
                       \alpha^{} &  \alpha^{23} & \alpha^{8} \\
                       \alpha^{6} &  \alpha^{10}  & \alpha^{15} \\
                       \alpha^{9} &  \alpha^{7}  & \alpha^{6}
        \end{array}\mright)
    \end{align*}
%    \fh{We could write $\alpha$ instead of $\alpha^1$ in the first column of the right-hand side.}
    gives
    \begin{equation*}
    \A^\top = \begin{pmatrix}
%                    \alpha^{19} & \alpha^{17} & \alpha^{2} \\
                    \alpha^{19} & \alpha^{17} & \alpha^{2} \\
                    \alpha^{6} & \alpha^{10} & \alpha^{15} \\
                    \alpha^{9} & \alpha^{7} & \alpha^{6}
    \end{pmatrix}\Rightarrow \hat{\E} = \A\B =  \left(
        \newcolumntype{C}[1]{>{\centering\arraybackslash}p{#1}}
        \begin{array}{cc|cc|C{13pt}C{13pt}}
                       \alpha^{19} &  \alpha^{} & \alpha^{6}  & \alpha^{9} & 0 & 0  \\
                       \alpha^{17} &  \alpha^{23} & \alpha^{10} & \alpha^{7}   & 0 & 0  \\
                       \alpha^{2} &  \alpha^{8} & \alpha^{15} & \alpha^{6} & 0 & 0
        \end{array}\right)
    \end{equation*}
    and $\hat{\E} = \E$. Finally, the codeword $\C$ can be recovered as $\C = \Y - \hat{\E}$.
\end{example}

\section{Implications for Decoding High-Order Interleaved Skew-Metric Codes}

The \emph{skew metric} is closely related to the sum-rank metric and was first considered in~\cite{martinez2018skew}.
In particular, there exists an isometry between the sum-rank metric and the skew metric for most code parameters (see~\cite[Theorem~3]{martinez2018skew}).

%We show in this section how a skew-metric code can be constructed from a high-order interleaved sum-rank-metric code.
We show in this section how an interleaved skew-metric code can be constructed from a high-order interleaved sum-rank-metric code.
%\fh{Should this be either "how an interleaved skew-metric code ... from a high-order interleaved sum-rank-metric code" or "how a skew-metric code ... from a sum-rank-metric code"? Or do we maybe rather want to write something like "how to transfer a high-order interleaved sum-rank-metric code into the skew metric"?}
%\tj{Good question! I'm not sure. But as I understand both should be fine but as it is it is mixed interleaved and non-interleaved. Maybe Hannes can comment on this.}
This enables us to apply the presented decoder to the obtained high-order interleaved skew-metric codes and correct
errors of a fixed skew weight.

The mentioned isometry can be described and applied to the interleaved context as follows:
Let us consider vectors from $\Fqm^n$, where $n$ satisfies the constraints in~\cite[Theorem~2]{martinez2018skew}.
By~\cite[Theorem~3]{martinez2018skew}, there exists an invertible diagonal matrix $\D \in \Fqm^{n \times n}$ such that
%\fh{To me, the double parentheses are not that well readable. Maybe we can write "such that <equation> holds for the skew weight (see ... for its definition)".}
%\tj{I agree. I made some changes which I think are better now}
\begin{equation}\label{eq:def_isometry}
    \SumRankWeight(\x\D) = \SkewWeight(\x), \qquad \forall \x \in \Fqm^n,
\end{equation}
where for the definition of the skew weight $\SkewWeight(\cdot)$ see~\cite[Definition~9]{martinez2018skew}.
The skew metric for interleaved matrices has been considered in~\cite{bartz2022fast}.
Namely, the extension of~\eqref{eq:def_isometry} to $\Fqm^{\intOrder \times n}$, we get (see~\cite{bartz2022fast})
%\fh{I personally don't like that we cite the same thing twice within two sentences. Maybe change the second sentence to "Namely, the extension of (36) to $\Fqm^{\intOrder \times n}$ yields"?}
\begin{equation}\label{eq:def_isometry_int}
    \SumRankWeight(\X\D) = \SkewWeight(\X), \qquad \forall \X \in \Fqm^{\intOrder \times n}.
\end{equation}
Now consider a linear $\intOrder$-interleaved sum-rank-metric code $\mycode{IC}[\intOrder;\n,k,\dminsr]$ with parity-check matrix $\H$.
%\fh{In the reasoning below, it could be helpful to write $\mycode{IC}_{\Sigma R}[\intOrder;\n,k,\dminsr]$ similar to $\mycode{IC}_{skew}[\intOrder; n, k, d]$ for clarity.}
Then by~\eqref{eq:def_isometry_int} the code
\begin{equation*}
    \mycode{IC}_{skew}[\intOrder; n, k, d] \defeq 
    \left\{\C\D^{-1}:\C \in \mycode{IC}[\intOrder; \n, k, d] \right\}
\end{equation*}
is an $\intOrder$-interleaved skew-metric code with minimum skew distance $d$.
Observe that the parity-check matrix of the constituent skew-metric code $\mycode{C}_{skew}[n,k,d]$ of $\mycode{IC}_{skew}[\intOrder; n, k, d]$ is given by $\H_{skew} = \H\D$.

Let us now study a decoding problem related to the obtained skew-metric code.
Consider a matrix $\Y = \C + \E$ where $\C \in \mycode{IC}_{skew}[\intOrder; n, k, d]$ and $\E$ is an error matrix with $\SkewWeight(\E)=t$.
Then~\eqref{eq:def_isometry_int} implies that we have
\begin{equation*}
    \widetilde{\Y} \defeq (\C + \E)\D = \widetilde{\C} + \widetilde{\E}
\end{equation*}
where $\widetilde{\C} \in \mycode{IC}[\intOrder; \n, k, d]$ and $\SumRankWeight(\E)=t$.
Hence, using the isometry from~\cite[Theorem~3]{martinez2018skew} we can map the decoding problem in the skew metric to the sum-rank metric and vice versa (see also~\cite{bartz2022fast}).

In particular, this allows us to use Algorithm~\ref{alg:decode_high_order_int_sum_rank} to solve the posed decoding problem in the skew metric.
The steps to decode a high-order interleaved skew-metric code $\mycode{IC}_{skew}[\intOrder; n, k, d]$ with parity-check matrix $\H_{skew}$ (whose parameters comply with~\cite[Theorem~2]{martinez2018skew}) can be summarized as follows:
\begin{enumerate}
    \item Compute the transformed received matrix $\widetilde{\Y} \defeq (\C + \E)\D=\widetilde{\C} + \widetilde{\E}$ where $\C \in \mycode{IC}_{skew}[\intOrder; n, k, d]$ and $\SkewWeight(\E)=t$.

    \item Apply Algorithm~\ref{alg:decode_high_order_int_sum_rank} to $\widetilde{\Y}$. If $\rk_{q^m}(\widetilde{\E})=t$, which is equivalent to $\rk_{q^m}(\E)=t$, the algorithm recovers $\widetilde{\C} \in \mycode{IC}[\intOrder; \n, k, d]$.

    \item Recover $\C \in \mycode{IC}_{skew}[\intOrder; n, k, d]$ as $\C = \widetilde{\C}\D^{-1}$.
\end{enumerate}

Since the first and the the third step both require $\oh{\intOrder n}$ operations in $\Fqm$, the overall complexity is
dominated by the complexity of Algorithm~\ref{alg:decode_high_order_int_sum_rank}, that is
$\oh{\max\{n^3, n^2\intOrder\}}$ operations in $\Fqm$.

%----------------------------------------------------------------------------
% Comparison
%----------------------------------------------------------------------------
\section{Comparison of Metzner-Kapturowski-like Decoders in the Hamming, Rank and Sum-Rank Metric} \label{sec:comparison}
%\begin{itemize}
%    \item Understand in detail how the decoders for Hamming and rank metric are recovered
%    \begin{itemize}
%        \item \todo{See if we get something new for the Hamming metric (like e.g. a Metzner--Kapturowski decoder for \emph{generalized} \ac{RS} codes) $\Rightarrow$ FH}
%
%        \item Here we might get a nice figure
%    \end{itemize}
%
%    \item Show relation to known \acs{ILRS} decoders
%\end{itemize}

The decoder presented in Algorithm~\ref{alg:decode_high_order_int_sum_rank} is a generalization of the Metzner--Kaptu\-row\-ski decoder for the Hamming metric~\cite{metzner1990general} and the Metzner--Kapturowski-like decoder for the rank metric~\cite{puchinger2019decoding}.
In this section we illustrate how the proposed decoder works in three different metrics: 1.) Hamming metric, 2.) Rank metric and 3.) Sum-rank metric. 
Note that the Hamming and the rank metric are both special cases of the sum-rank metric.
We also show the analogy of the different definitions of the error support for all three cases.

The support for the Hamming-metric case is defined as
\begin{equation*}
    \HammingSupp(\E) \defeq \{j \,:\, j\text{-th column of } \E \text{ is non-zero}\}.
\end{equation*}
%\fh{Here, we use $|$ to separate the condition in a set definition. However, we sometimes also use $:$ in the paper (e.g. in the definition of interleaved codes). Can we make this consistent?}
In the Hamming metric an error matrix $\E$ with $t_{H}$
%\fh{Why don't we simply use $t$ here? I don't think it would be a naming conflict.}\tj{I thought it might be worth in the comparison section to distinguish and point out that the weights are in different metrics.} \fh{I understand the intention. It's certainly not wrong even though I think it's not necessary. But I'm also happy if we keep it that way. :-)}
errors can be decomposed into $\E=\A\B$, where the rows of $\B$ are the unit vectors corresponding to the $t_H$ error positions.
This means the support of the error matrix is given by the union of the supports of the rows $\B_i$ of $\B$ ($\forall i\in[1:t_H]$), hence
\begin{equation*}
    \HammingSupp(\E) = \bigcup\limits_{i=1}^{t_H}\HammingSupp(\B_i).
\end{equation*}
If the condition for the Metzner--Kapturowski decoder is fulfilled (\emph{full-rank condition}), then the zero columns in $\Hsub$ indicate the error positions and thus give rise to the error support, i.e. we have that
\begin{equation*}
    \HammingSupp(\E) = [1:n] \setminus \bigcup\limits_{i=1}^{n-k-t_H} \HammingSupp(\H_{\text{sub},i})
\end{equation*}
where $\H_{\text{sub},i}$ is the $i$-th row of $\Hsub$.
%\fh{Maybe state that $\H_{\text{sub},i}$ stands for the $i$-th row of $\H_{sub}$ (similar to $\B_i$ before).}
Figure~\ref{fig:hamming_figure} illustrates how the error support $\HammingSupp(\E)$ can be recovered from $\Hsub$.

The rank-metric case is similar, except for a different notion
%\fh{Do we mean "notion" here? It's not only notation...}\tj{Yeah, you are right. Or maybe even "definition"?} \fh{I think both solutions are fine. I'd personally use "notion" but I simply like the word. :D}
for the error support.
Again, the error $\E$ with $\rkq(\E) = t_R$ can be decomposed as $\E=\A\B$. Then the rank support $\RankSupp(\E)$ of $\E$ equals the row space of $\ext{(\B)}$, which is spanned by the union of all rows of $\ext{(\B_i)}$ with $\B_i$ being the $i$-th row of $\B$.
This means the support of $\E$ is given by
\begin{equation*}
    \RankSupp(\E) = \mksum\limits_{i=1}^{t_R} \RankSupp(\B_i)
\end{equation*}
%\fh{We have never defined $t_R$ (maybe add it when the decomposition is explained). In my opinion, we could still use $t$ though.}
with $\mksum$ being the addition of vector spaces, which means the span of the union of the considered spaces.
If the condition on the error matrix (\emph{full-rank condition}) is fulfilled, the rank support of $\E$ is given by the kernel of $\ext(\Hsub)$~\cite{renner2021decoding}.
As illustrated in Figure~\ref{fig:rank_figure} the row space of $\ext(\Hsub)$ can be computed by obtaining the span of the union of spaces $\RankSupp(\H_{\text{sub},i})$, where $\H_{\text{sub},i}$ is the $i$-th row of $\Hsub$.
Finally, the support of $\E$ is given by
\begin{equation*}
    \RankSupp(\E) = \left(\mksum\limits_{i=1}^{n-k-t_R}\RankSupp(\H_{\text{sub},i})\right)^{\perp}.
\end{equation*}

For the sum-rank metric we get from~\eqref{eq:sumranksupport} that
\begin{align*}
    \SumRankSupp(\E) &= \RankSupp{(\B^{(1)})} \times \RankSupp{(\B^{(2)})} \times \dots \times \RankSupp{(\B^{(\shots)})} \\
    &= \left(\mksum\limits_{i=1}^{n-k-t_1} \RankSupp(\B_{1}^{(1)})\right) \times\dots\times\left(\mksum\limits_{i=1}^{n-k-t_\shots} \RankSupp(\B_{\shots}^{(\shots)})\right).
\end{align*}
%\fh{Maybe state once more what $t_1, \dots, t_{\shots}$ are. It seems quite a while ago since we used them last.}

According to Theorem~\ref{thm:sum_rank_support_recovery} we have that
\begin{align*}
    \SumRankSupp(\E) &= \left(\mksum\limits_{i=1}^{n-k-t_1} \RankSupp(\H_{\text{sub},1}^{(1)})\right)^{\perp} \times\dots\\
    &\dots\times\left(\mksum\limits_{i=1}^{n-k-t_\shots} \RankSupp(\H_{\text{sub},\shots}^{(\shots)})\right)^{\perp}.
\end{align*}
%\tj{Maybe add a table to compare the support notations of the three cases.}
%\fh{overfull hbox in the previous line.}
%\fh{Add a sentence that refers to Figure 3.}
The relation between the error matrix $\E$, the matrix $\Hsub$ and the error supports for the Hamming metric, rank metric and sum-rank metric are illustrated in Figure~\ref{fig:hamming_figure}, Figure~\ref{fig:rank_figure} and Figure~\ref{fig:sum_rank_figure}, respectively.

%\begin{align}\label{eq:sumranksupport}
%\SumRankSupp(\E)
%\defeq
%&\Rowspace{\E^{(1)}} \times \Rowspace{\E^{(2)}} \times \dots \times \Rowspace{\E^{(\shots)}}
%\\
%=&\Rowspace{\B^{(1)}} \times \Rowspace{\B^{(2)}} \times \dots \times \Rowspace{\B^{(\shots)}}.
%\end{align}

\begin{figure}[ht!]
    \centering
    \begin{tikzpicture}[scale=.45]

    \draw[] (0,0) rectangle (0.5,3);
    \draw[] (0.5,0) rectangle (1,3);
    \draw[] (1,0) rectangle (1.5,3);
    \draw[] (1.5,0) rectangle (2,3);
    \draw[pattern=north west lines, pattern color=black] (2,0) rectangle (2.5,3);
    \draw[pattern=north west lines, pattern color=black] (2.5,0) rectangle (3,3);
    \draw[] (3,0) rectangle (3.5,3);
    \draw[] (3.5,0) rectangle (4,3);
    \draw[] (4,0) rectangle (4.5,3);
    \draw[] (4.5,0) rectangle (5,3);
    \draw[] (5,0) rectangle (5.5,3);
    \draw[] (5.5,0) rectangle (6,3);
    \draw[pattern=north west lines, pattern color=black] (6,0) rectangle (6.5,3);
    \draw[] (6.5,0) rectangle (7,3);
    \draw[pattern=north west lines, pattern color=black] (7,0) rectangle (7.5,3);
    \draw[] (7.5,0) rectangle (8,3);
    \draw[] (8,0) rectangle (8.5,3);
    \draw[] (8.5,0) rectangle (9,3);

    \draw[-latex] (2.2,-0.5) -- (2.2,0);
    \draw[-latex] (2.7,-0.5) -- (2.7,0);
    \draw[-latex] (6.2,-0.5) -- (6.2,0);
    \draw[-latex] (7.2,-0.5) -- (7.2,0);
    \node[] () at (5.0, -1.0) {error positions}; %each block $\in Fq^{s\times n_i}$ and of $\Fq$-rank $t_i$};

    \node[] () at (-1, 1.5) {$\E = $};
%    \node[] () at (10, 1.5) {$s$};
%    \node[] () at (4, 3.5) {$n_i$};

    \node[] () at (10, 1.5) {$=$};

    \draw[pattern=north west lines, pattern color=black] (11,0) rectangle (14,4);
%    \draw[pattern=north west lines, pattern color=black] (12,0) rectangle (12.5,4);
%    \draw[pattern=north west lines, pattern color=black] (12.5,0) rectangle (14,4);
%    \node[] () at (12.5, 4.5) {$t$};
    \node[] () at (12.5, -0.5) {$\A$};
    \node[] () at (14.5, 1.5) {$\cdot$};
%    \draw[] (15,0.5) rectangle (24,3.5);

    \draw[] (15,0) rectangle (15.5,3);
    \draw[] (15.5,0) rectangle (16,3);
    \draw[] (16,0) rectangle (16.5,3);
    \draw[] (16.5,0) rectangle (17,3);
    \draw[] (17,0) rectangle (17.5,3);
    \draw[] (17.5,0) rectangle (18,3);
    \draw[] (18,0) rectangle (18.5,3);
    \draw[] (18.5,0) rectangle (19,3);
    \draw[] (19,0) rectangle (19.5,3);
    \draw[] (19.5,0) rectangle (20,3);
    \draw[] (20,0) rectangle (20.5,3);
    \draw[] (20.5,0) rectangle (21,3);
    \draw[] (21,0) rectangle (21.5,3);
    \draw[] (21.5,0) rectangle (22,3);
    \draw[] (22,0) rectangle (22.5,3);
    \draw[] (22.5,0) rectangle (23,3);
%    \node[draw,rectangle,pattern=north west lines, pattern color=gray, minimum width=0.5cm, minimum height=0.75cm] () at (22.5, 0.70) {};
    \draw[] (22,0) rectangle (22.5,3);
    \draw[] (22.5,0) rectangle (23,3);
    \draw[] (23,0) rectangle (23.5,3);
    \draw[] (23.5,0) rectangle (24,3);
    \node[] () at (19, -0.5) {$\B$};
%    \node[] () at (19, 4.0) {$n$};
%    \node[] () at (24.5, 1.5) {$t$};
       \node[] () at (17.3, 2.6) {\small{1}};
    \node[] () at (17.8, 1.9) {\small{1}};
    \node[] () at (21.3, 1.2) {\small{1}};
    \node[] () at (22.3, 0.4) {\small{1}};

    \draw[-latex] (17.2,-1.0) -- (17.2,0);
    \draw[-latex] (17.7,-1.0) -- (17.7,0);
    \draw[-latex] (21.2,-1.0) -- (21.2,0);
    \draw[-latex] (22.2,-1.0) -- (22.2,0);

    \node[] () at (19.5, -1.5) {error positions};

    \node[] () at (-1.7, -5) {$\bm{H}_\text{sub}=$};
    \draw[pattern=north west lines] (0, -3.5) rectangle (0.5, -6.5);
    \draw[pattern=north west lines] (0.5, -3.5) rectangle (1, -6.5);
    \draw[pattern=north west lines] (1, -3.5) rectangle (1.5, -6.5);
    \draw[pattern=north west lines] (1.5, -3.5) rectangle (2, -6.5);
    \draw[] (2, -3.5) rectangle (2.5, -6.5);
    \draw[] (2.5, -3.5) rectangle (3, -6.5);
    \draw[pattern=north west lines] (3, -3.5) rectangle (3.5, -6.5);
    \draw[pattern=north west lines] (3.5, -3.5) rectangle (4, -6.5);
    \draw[pattern=north west lines] (4, -3.5) rectangle (4.5, -6.5);
    \draw[pattern=north west lines] (4.5, -3.5) rectangle (5, -6.5);
    \draw[pattern=north west lines] (5, -3.5) rectangle (5.5, -6.5);
    \draw[pattern=north west lines] (5.5, -3.5) rectangle (6, -6.5);
    \draw[] (6, -3.5) rectangle (6.5, -6.5);
    \draw[pattern=north west lines] (6.5, -3.5) rectangle (7, -6.5);
    \draw[] (7, -3.5) rectangle (7.5, -6.5);
    \draw[pattern=north west lines] (7.5, -3.5) rectangle (8, -6.5);
    \draw[pattern=north west lines] (8, -3.5) rectangle (8.5, -6.5);
    \draw[pattern=north west lines] (8.5, -3.5) rectangle (9, -6.5);
%    \draw[] (4, -3.5) rectangle (5, -6.5);
%    \draw[] (5, -3.5) rectangle (7, -6.5);
%    \draw[] (6, -3.5) rectangle (7, -6.5);
%    \node[] () at (11, -5) {$n-k-t$};
%    \node[] () at (4, -7.1) {$n$};

    \draw[-latex] (2.2,-7.5) -- (2.2,-6.5);
    \draw[-latex] (2.7,-7.5) -- (2.7,-6.5);
    \draw[-latex] (6.2,-7.5) -- (6.2,-6.5);
    \draw[-latex] (7.2,-7.5) -- (7.2,-6.5);
%    \draw[-latex] (7.5,-7.5) -- (7.5,-6.5);
    \node[] () at (5.0, -8.0) {all-zero columns in error positions};
    \node[anchor=west] () at (10.0, -5) {$\Rightarrow \HammingSupp(\E) =\bigcup\limits_{i=1}^{t_H}\HammingSupp(\B_i) =  $};
    \node[anchor=west] () at (12, -7.5) {$ =[1:n] \setminus \bigcup\limits_{i=1}^{n-k-t_H}\HammingSupp(\H_{\text{sub},i})$};
\end{tikzpicture}
    \caption{Illustration of the error support for the Hamming-metric case with $\E\in\Fqm^{\intOrder \times n}$, $\A\in\Fqm^{\intOrder\times t_H}$, $\B\in\Fq^{t_H\times n}$ and $\Hsub\in\Fqm^{(n-k-t_H)\times n}$. $\B_i$ is the $i$-th row of $\B$ and $\H_{\text{sub},i}$ the $i$-th row of $\Hsub$.}\label{fig:hamming_figure}
\end{figure}

\begin{figure}[ht!]
    \centering
    \begin{tikzpicture}[scale=.45]

    \draw[pattern=north west lines, pattern color=black] (0,0) rectangle (9,3);

    \node[] () at (-1, 1.5) {$\E = $};
%    \node[] () at (10, 1.5) {$s$};
%    \node[] () at (4, 3.5) {$n_i$};

    \node[] () at (10, 1.5) {$=$};
    \node[] () at (19, 1.5) {$\Fq$};

    \draw[pattern=north west lines, pattern color=black] (11,0) rectangle (14,4);
%    \node[] () at (12.5, 4.5) {$t$};
    \node[] () at (12.5, -0.5) {$\A$};
    \node[] () at (14.5, 1.5) {$\cdot$};
%    \draw[] (15,0.5) rectangle (24,3.5);
    \draw[] (15,0) rectangle (24,3);
    \node[] () at (19, -0.5) {$\B$};
%    \node[] () at (19, 4.0) {$n$};
%    \node[] () at (24.5, 1.5) {$t$};

%
%    \node[] (T1) at (16.5, 4.32) {$\in \Fq^{t_i \times n_i}$};
%    \draw[-latex] (T1) -- (T2);
%
    \node[] () at (0, -5) {$\bm{H}_\text{sub} = $};

    \draw[] (3, -3.0) rectangle node {$\H_{\text{sub},1}$} (8, -4.5);
    \draw[] (3, -4.5) rectangle node {$\vdots$} (8, -6.0);
    \draw[] (3, -6.0) rectangle node {$\H_{\text{sub},n-k-t}$} (8, -7.5);

        \node[] () at (11, -5) {$\mapsto \ext(\bm{H}_\text{sub}) = $};

    \draw[] (14, -2.0) rectangle node {\makecell{generating set of \\ $\RankSupp(\H_{\text{sub},1})$}} (23.5, -4.5);
    \draw[] (14, -4.5) rectangle node {$\vdots$} (23.5, -6.0);
    \draw[] (14, -6.0) rectangle node {\makecell{generating set of \\ $\RankSupp(\H_{\text{sub},n-k-t})$}} (23.5, -8.5);

\end{tikzpicture}
    \caption{Illustration of the error support for the rank-metric case with $\E\in\Fqm^{\intOrder \times n}$, $\A\in\Fqm^{\intOrder\times t_R}$, $\B\in\Fq^{t_R\times n}$ and $\Hsub\in\Fqm^{(n-k-t_R)\times n}$. $\B_i$ is the $i$-th row of $\B$ and $\H_{\text{sub},i}$ the $i$-th row of $\Hsub$.}\label{fig:rank_figure}
\end{figure}

\begin{figure}[ht!]
    \centering
    \begin{tikzpicture}[scale=.45]

    \draw[] (0,0) rectangle (1,3);
    \draw[] (1,0) rectangle (2,3);
    \draw[pattern=north west lines, pattern color=black] (2,0) rectangle (3,3);
    \draw[] (3,0) rectangle (5,3);
    \draw[pattern=north west lines, pattern color=black] (5,0) rectangle (7,3);
    \draw[pattern=north west lines, pattern color=black] (7,0) rectangle (8,3);
    \draw[] (7,0) rectangle (9,3);

    \draw[-latex] (2.5,-0.5) -- (2.5,0);
    \draw[-latex] (6.0,-0.5) -- (6.0,0);
    \draw[-latex] (7.5,-0.5) -- (7.5,0);
    \node[] () at (5.0, -1.0) {blocks with rank errors}; %each block $\in Fq^{s\times n_i}$ and of $\Fq$-rank $t_i$};

    \node () at (0.90, 4) {$\E^{(1)}$};
    \draw[-latex] (0.60,3.5) -- (0.60,3);
    \node () at (4.90, 4) {$\cdots$};
    \node () at (8.90, 4) {$\E^{(\shots)}$};
    \draw[-latex] (8.60,3.5) -- (8.60,3);

    \node[] () at (-1, 1.5) {$\E = $};
%    \node[] () at (10, 1.5) {$s$};
%    \node[] () at (4, 3.5) {$n_i$};

    \node[] () at (10, 1.5) {$=$};

    \draw[pattern=north west lines, pattern color=black] (11,0) rectangle (12,4);
    \draw[pattern=north west lines, pattern color=black] (12,0) rectangle (12.5,4);
    \draw[pattern=north west lines, pattern color=black] (12.5,0) rectangle (14,4);
%    \node[] () at (12.5, 4.5) {$t$};
    \node[] () at (12.5, -0.5) {$\A$};
    \node[] () at (14.5, 1.5) {$\cdot$};
%    \draw[] (15,0.5) rectangle (24,3.5);
    \draw[] (15,0) rectangle (16,3);
    \draw[] (16,0) rectangle (17,3);
    \draw[] (17,0) rectangle (18,3);
    \draw[pattern=north west lines, pattern color=black] (17, 2) rectangle (18, 3);
    \draw[] (18,0) rectangle (20,3);
    \draw[] (20,0) rectangle (22,3);
    \draw[pattern=north west lines, pattern color=black] (20, 1.5) rectangle (22, 2);
    \draw[] (22,0) rectangle (23,3);
%    \node[draw,rectangle,pattern=north west lines, pattern color=gray, minimum width=0.5cm, minimum height=0.75cm] () at (22.5, 0.70) {};
    \draw[pattern=north west lines, pattern color=black] (22, 0) rectangle (23, 1.5);
    \draw[] (22,0) rectangle (24,3);
    \node[] () at (19, -0.5) {$\B$};
%    \node[] () at (19, 4.0) {$n$};
%    \node[] () at (24.5, 1.5) {$t$};

    \node () at (11.5, 5.0) {$\A^{(1)}$};
    \draw[-latex] (11.20, 4.5) -- (11.20, 4.0);
    \node () at (12.80, 5.0) {$\cdots$};
    \node () at (14.0, 5.0) {$\A^{(\shots)}$};
    \draw[-latex] (13.50, 4.5) -- (13.50, 4.0);
%    \draw[-latex] (8.60,3.5) -- (8.60,3);

    \node () at (17.90, 4) {$\B^{(i)}$};
    \draw[-latex] (17.50,3.5) -- (17.50,3);
%    \node () at (15.90, 4) {$\B^{(1)}$};
%    \draw[-latex] (15.60,3.5) -- (15.60,3);
%    \node () at (19.90, 4) {$\cdots$};
%    \node () at (23.90, 4) {$\B^{(\shots)}$};
%    \draw[-latex] (23.60,3.5) -- (23.60,3);
%
    \draw[-latex] (17.5, -1.0) -- (17.5, 0);
    \draw[-latex] (21, -1.0) -- (21,0);
    \draw[-latex] (22.5,-1.0) -- (22.5,0);

    \node[] () at (19.5, -1.5) {blocks with rank errors};

    \node[] () at (-1.7, -5) {$\bm{H}_\text{sub}=$};
    \draw[pattern=north west lines] (0, -3.5) rectangle (1, -6.5);
    \draw[pattern=north west lines] (1, -3.5) rectangle (2, -6.5);
    \draw[] (2, -3.5) rectangle (3, -6.5);
    \draw[pattern=north west lines] (3, -3.5) rectangle (5, -6.5);
%    \draw[] (4, -3.5) rectangle (5, -6.5);
    \draw[] (5, -3.5) rectangle (7, -6.5);
%    \draw[] (6, -3.5) rectangle (7, -6.5);
    \draw[pattern=north west lines] (7, -3.5) rectangle (8, -6.5);
    \draw[pattern=north west lines] (8, -3.5) rectangle (9, -6.5);
%    \node[] () at (11, -5) {$n-k-t$};
%    \node[] () at (4, -7.1) {$n$};

    \node () at (0.90, -2.5) {$\bm{H}_\text{sub}^{(1)}$};
    \draw[-latex] (0.60,-3.0) -- (0.60,-3.5);
    \node () at (4.90, -2.5) {$\cdots$};
    \node () at (8.90, -2.5) {$\bm{H}_\text{sub}^{(\shots)}$};
    \draw[-latex] (8.60,-3.0) -- (8.60,-3.5);

    \draw[-latex] (2.5,-7.5) -- (2.5,-6.5);
    \draw[-latex] (6.0,-7.5) -- (6.0,-6.5);
%    \draw[-latex] (7.5,-7.5) -- (7.5,-6.5);
    \node[] () at (5.0, -8.0) {all-zero blocks at positions of full-rank errors};
%
%    \node[] (T1) at (16.5, 4.32) {$\in \Fq^{t_i \times n_i}$};
%    \draw[-latex] (T1) -- (T2);
%
    \node[] () at (12, -5) {$\mapsto \bm{H}_\text{sub}^{(i)} = $};
    \node[] () at (12, -6.5) {$\forall i \in [1:\shots]$};

    \draw[] (15, -3.0) rectangle node {$\H_{\text{sub},1}^{(i)}$} (20, -4.5);
    \draw[] (15, -4.5) rectangle node {$\vdots$} (20, -6.0);
    \draw[] (15, -6.0) rectangle node {$\H_{\text{sub},n-k-t}^{(i)}$} (20, -7.5);

\end{tikzpicture}
%    \caption{Illustration of the error support for the sum-rank-metric case. \fh{Why is this caption so short and the others are not? I'm fine with either way but I think we should be consistent.}}\label{fig:sum_rank_figure}
    \caption{Illustration of the error support for the sum-rank-metric case with $\E\in\Fqm^{\intOrder \times n}$, $\A\in\Fqm^{\intOrder\times t_{\Sigma R}}$, $\B\in\Fq^{t_{\Sigma R}\times n}$ and $\Hsub\in\Fqm^{(n-k-t_{\Sigma R})\times n}$. $\B_i$ is the $i$-th row of $\B$ and $\H_{\text{sub},i}$ the $i$-th row of $\Hsub$.}\label{fig:sum_rank_figure}
\end{figure}

%\input{sections/generalizations.tex}
%----------------------------------------------------------------------------
% Conclusion
%----------------------------------------------------------------------------
\section{Conclusion} \label{sec:conclusion}

%\tj{also here maybe a sentence about complexity of code without strucutre.}

We studied the decoding of homogeneous $s$-interleaved sum-rank-metric codes that are obtained by vertically stacking
$s$ codewords of the same arbitrary linear constituent code $\mycode{C}$ over $\Fqm$.
The proposed \MeKap-like decoder for the sum-rank metric relies on linear-algebraic operations only and has a complexity of
$\oh{\max\{n^3, n^2\intOrder\}}$ operations in $\Fqm$, where $n$ denotes the length of $\mycode{C}$.
The decoder works for any linear constituent code and therefore the decoding complexity is not affected by the decoding complexity of the constituent code.
The proposed \MeKap-like decoder can guarantee to correct error matrices $\E \in \Fqm^{\intOrder \times n}$ of sum-rank weight $t \leq d-2$, where $d$ is the minimum distance of $\mycode{C}$, if $\E$ has full $\Fqm$-rank $t$, which implies the high-order condition $s \geq t$.

As the sum-rank metric generalizes both, the Hamming metric and the rank metric, Metzner and Kapturowski's decoder in the
Hamming metric and its analog in the rank metric are both recovered as special cases from our proposal.
Moreover, we showed how the presented algorithm can be used to solve the decoding problem of some high-order interleaved
skew-metric codes.

Since the decoding process is independent of any structural knowledge about the constituent code, this result has a high
impact on the design and the security-level estimation of new code-based cryptosystems in the sum-rank metric.
In fact, if high-order interleaved codes are e.g. used in a classical McEliece-like scheme, any error of sum-rank weight $t \leq d-2$ with full
$\Fqm$-rank $t$ can be decoded without knowledge of the private key.
This directly renders this approach insecure and shows that the consequences of the presented results need to be carefully considered for the design of quantum-resistant public-key systems.

We conclude the paper by giving some further research directions:
The proposed decoder is capable of decoding an error correctly as long as it satisfies the full-rank condition and has
sum-rank weight at most $d-2$, where $d$ denotes the minimum distance of the constituent code.
Similar to Haslach and Vinck's work~\cite{haslach2000efficient} in the Hamming metric, it could be interesting to
abandon the full-rank condition and study a decoder that can also handle linearly dependent errors.
Another approach, that has already been pursued in the Hamming and the rank metric~\cite{oh1994performance,puchinger2019decoding},
is to allow error weights exceeding $d-2$ and investigate probabilistic decoding.

Moreover, an extension of the decoder to heterogeneous interleaved codes (cp.~\cite{puchinger2019decoding} for the
rank-metric case) and the development of a more general decoding framework for high-order interleaved skew-metric codes
can be investigated.

%----------------------------------------------------------------------------
% References
%----------------------------------------------------------------------------
% \section*{Acknowledgment} 
\bibliographystyle{splncs04}
\bibliography{references}

%----------------------------------------------------------------------------
% Appendix
%----------------------------------------------------------------------------
% \clearpage
% \section*{Appendix} 
% \appendices

% \addtolength{\labelsep}{-1pt}
% \vspace{3pt}

\end{document}